\documentclass[submission,copyright,creativecommons]{eptcs}

% Common packages

\usepackage{amsmath}
\usepackage{amssymb}
\usepackage{amsthm}
\usepackage{listings}
\usepackage{xspace}
\usepackage{graphicx}
\usepackage{stmaryrd}
\usepackage{tikz}
\usepackage{rotating}

%%% Local Variables:
%%% mode: latex
%%% TeX-master: t
%%% End:

% Colours

\usepackage{color}
\definecolor{dkblue}{rgb}{0,0.1,0.5}
\definecolor{dkgreen}{rgb}{0,0.4,0}

\definecolor{linkColor}{rgb}{0,0,0.5}

% Theorems etc.
\makeatletter
\@ifclassloaded{beamer}
  {\relax}
  {\newtheorem{theorem}{Theorem}

\newtheorem{lemma}[theorem]{Lemma}

\newtheorem{proposition}{Proposition}

\theoremstyle{definition}
\newtheorem{definition}{Definition}
\newtheorem{example}{Example}

}
\makeatother

% Syntactic categories

% Structural operational semantics

\newcommand{\setof}[2]{\ensuremath{\{ #1 \mid #2 \}}}
\newcommand{\set}[1]{\ensuremath{\{ #1 \}}}

\newcommand{\ra}[1][\relax]{\ensuremath \rightarrow_{#1}}

\newcommand{\pra}{\ensuremath \rightharpoonup }

% Overline and underline

\newcommand{\ol}{\overline}

% Syntaktiske operatorer

% Set notation and some common sets

\newcommand{\Nat}{{\mathbb N}}

\newcommand{\defeq}{\stackrel{\mathrm{def}}{=}}

\newcommand{\dom}[1]{\mathrm{dom}(#1)}
\newcommand{\ran}[1]{\mathrm{ran}(#1)}

% Reductions etc. (computability theory)

% Propositional logic

   % ff 

   % tt 

   % falsk

   % sand

% Various brackets

\newcommand{\subst}[2]{\{  {}^{#1} / {}_{#2} \}}

% Number theory

% Relations 

% Recursion

\newcommand{\trans}[1]{\xrightarrow{#1}}

% Inference rules

\newcommand{\infrule}[2]
           {\parbox{4.5cm}{$$ \frac{#1}{#2}\hspace{.5cm}$$}}

\newcommand{\runa}[1]{\mbox{\textsc{\protect{(#1})}}}

% Process calculus notation

\newcommand{\Names}{\ensuremath{\mathcal{N}}}
\newcommand{\Nil}{\mathbf{0}}

\newcommand{\para}{\mid}

\newcommand{\inpend}[2]{#1(#2)}
\newcommand{\outpend}[2]{{\overline{#1}}\langle #2 \rangle}
\newcommand{\inp}[2]{\inpend{#1}{#2}.}
\newcommand{\outp}[2]{\outpend{#1}{#2}.}
 % no trailing dot here!

\newcommand{\new}[2][\relax]{(\nu_{#1} #2)}

\newcommand{\fn}[1]{\mathrm{fn}(#1)}
\newcommand{\bn}[1]{\mathrm{bn}(#1)}
\newcommand{\n}[1]{\mathrm{n}(#1)}

%

% Process constructs of the ambient calculus; note the optional arguments

% Relations

\newcommand{\scong}{\ensuremath{\equiv}}

% Spi calculus notation

\lstdefinelanguage{spi}
 {morekeywords=[1]{ok,pair,fst,snd,Un,Key},
  morekeywords=[1]{Ok,Pair,Ch,empty,union},
  morekeywords=[2]{out,in,new,event,if,then,else,let,else,begin,end,exercise,decrypt,as,split,match},
  morekeywords=[3]{TermCtor,TermData,TermDtor,Predicate,Process,Private,TopLevel,Main},
  sensitive=true,%
  morecomment=[l]{//},
  literate={->}{$\to\:$}3,
  literate={<=}{$\leq\:$}3,
  morestring=[b]"}

\lstset{language=spi,
  columns=[l]fullflexible,
  texcl=true,
  mathescape=true,
  xleftmargin=10pt, 
  identifierstyle={\color{dkgreen}},
  keywordstyle=[1]{\bfseries\color{dkblue}},
  keywordstyle=[2]{\bfseries\color{dkblue}},
  keywordstyle=[3]{\bfseries\color{dkblue}},
% Here is the range marker stuff
  rangeprefix=//---,
  includerangemarker=false,
  stringstyle=\ttfamily,
  commentstyle=\rmfamily,
  showstringspaces=false,
  showspaces=false,
%  moredelim=*[s][\ttfamily\color{dkblue}]{<}{>},
  breaklines=false}

% Convenient referencing

\newcommand{\tabref}[1]{Table \ref{#1}}

% Nominal sets

% Dpi

% Cpos etc.

% Beregnelighed og kompleksitet

% Type theory

\newcommand{\Tenv}{\Gamma}
\newcommand{\lin}{\ensuremath{\mathsf{lin}\,}}
\newcommand{\un}{\ensuremath{\mathsf{un}\,}}
\usepackage{float}

\providecommand{\lin}{\ensuremath{\mathsf{lin}\;}}
\providecommand{\un}{\ensuremath{\mathsf{un}\;}}

\providecommand{\textfunc}[2]{\ensuremath{\textsf{#1}(#2)}}
\providecommand{\depth}[1]{\textfunc{depth}{#1}}
\providecommand{\nest}[1]{\textfunc{nest}{#1}}
\providecommand{\recs}[1]{\textfunc{recs}{#1}}
\providecommand{\recv}[1]{\textfunc{recv}{#1}}

\providecommand{\sessend}{\mathsf{end}}

\providecommand{\kn}[1]{\ensuremath{\mathrm{kn}(#1)}}

\providecommand{\Rec}{\ensuremath{\mathsf{rec}}}

\providecommand{\chan}[1]{\textfunc{Ch}{#1}}

\providecommand{\Types}{\ensuremath{\mathcal{T}}\xspace}

\newcommand{\outpt}[2]{!\, #1.#2}
\newcommand{\inpt}[2]{?\, #1.#2}

\newcommand{\pol}[2]{\ensuremath{#1}^{#2}}

\newcommand{\Recs}{\ensuremath{\mathcal{R}}\xspace}
\newcommand{\Renv}{\ensuremath{\Delta}\xspace}
\newcommand{\emptyRenv}{\ensuremath{\Renv_{\emptyset}}\xspace}

\renewcommand{\Tenv}{\ensuremath{\Gamma}\xspace}
\newcommand{\Tenvlin}{\ensuremath{\Tenv_{\mathsf{lin}}}\xspace}

\renewcommand{\Names}{\ensuremath{\mathcal{N}}\xspace}
\newcommand{\Polnames}{\ensuremath{\mathcal{N}_{\mathrm{pol}}}\xspace}

 % Name of the event you are submitting to

\title{Using Session Types for Reasoning About Boundedness in the $\pi$-Calculus}
\author{Hans H\"{u}ttel\thanks{E-mail: \texttt{hans@cs.aau.dk}}
\institute{Department of Computer Science, Aalborg University, Denmark}}

\begin{document}
\maketitle

\begin{abstract}
  The classes of depth-bounded and name-bounded processes are
  fragments of the $\pi$-calculus for which some of the decision
  problems that are undecidable for the full calculus become
  decidable. $P$ is depth-bounded at level $k$ if every reduction
  sequence for $P$ contains successor processes with at most $k$
  active nested restrictions. $P$ is name-bounded at level $k$ if
  every reduction sequence for $P$ contains successor processes with
  at most $k$ active bound names. Membership of these classes of
  processes is undecidable. In this paper we use binary session types
  to decise two type systems that give a sound characterization of
  the properties: If a process is well-typed in our first system, it is
  depth-bounded. If a process is well-typed in our second, more
  restrictive type system, it will also be name-bounded.
\end{abstract}

\section{Introduction}

In the $\pi$-calculus, the notion of name restriction is particularly
important. The study of properties of name binding is a testbed for
studying properties of bindable entities and notions of scoping in
programming languages. In a restriction process $\new{x}P$ the name
$x$ has $P$ as its scope and it is customary to think of $x$ as a new
name, known only to $P$. It is the interplay between restriction and
replication (or recursion) that leads to the $\pi$-calculus being
Turing-powerful. Without either of these two constructs, this is no
longer the case \cite{DBLP:journals/iandc/LanesePSS11}.

With full Turing power comes undecidability of commonly encountered
decision problems such as the
termination problem ``Given process $P$, will $P$ terminate?'' and the
coverability problem ``Given process $P$ and process $Q$, is there a
computation of $P$ that will eventually reach a process that has $Q$
as a subprocess?''.  Several classes of processes have been identified
for which (some of) these problems remain decidable. Examples are the
\emph{finitary processes} without replication or recursion, the
\emph{finite-control processes} \cite{DBLP:journals/iandc/Dam96} in
which every process has a uniform bound on the number of parallel
components in any computation, the \emph{bounded processes}
\cite{DBLP:conf/fossacs/Caires04} for which there are only finitely
many successors of any reduction up to a special notion of structural
congruence with permutation over a finite set of names, and
\emph{processes with unique receiver and bounded input}
\cite{Amadio:2002:DCR:643009.643011}. 

More recently, there has been work in this area that studies
limitations on the use of restriction that will ensure
decidability. The notion of \emph{depth-bounded} processes was
introduced by Meyer in \cite{DBLP:conf/ifipTCS/Meyer08}. A process $P$
is depth-bounded at level $k$ if there is an upper bound $k$, such
that any reduction sequence for $P$ will only lead to successor
processes that have at most $k$ active nested restrictions -- that is,
restrictions not occurring underneath some prefix. Termination and
coverability are both decidable for depth-bounded processes. The class
of depth-bounded processes is expressive and contains a variety of
other decidable subsets of the $\pi$-calculus. Moreover, for any fixed
$k$ it is decidable if a process $P$ is depth-bounded at level $k$;
however, it is undecidable if there exists a $k$ for which $P$ is
depth-bounded \cite{DBLP:conf/ifipTCS/Meyer08}.

In a more recent paper \cite{DBLP:journals/corr/DOsualdoO15},
D'Osualdo and Ong have introduced a type system that gives a sound
characterization of depth-boundedness: If $P$ is well-typed, then $P$
is depth-bounded. The underlying idea of this type system is to
analyze properties of the hierarchy of restrictions within a process.

Another class of $\pi$-calculus processes is that of
\emph{name-bounded} processes, introduced by H\"{u}chting et
al. \cite{Hüchting2013}. A process $P$ is name-bounded at level $k$
if any reduction sequence for $P$ will only lead to successor
processes with at most $k$ active bound names.

The goal of this paper is to use binary session types
\cite{DBLP:conf/esop/HondaVK98} to give sound characterizations of
depth-boundedness, respectively name-boundedness in the
$\pi$-calculus: If a process is well-typed, we know that it is
depth-bounded, respectively name-bounded. The advantages of this
approach are the following: Firstly, unlike the type system proposed
by D'Osualdo and Ong ¨\cite{DBLP:journals/corr/DOsualdoO15} we can
directly keep track of how names are used and where they appear in a
process, since this is central to session type disciplines. The linear
nature of session names ensures that every name of this kind will
always, when used, occur in precisely two parallel
components. Secondly, the session type disciplines are
resource-conscious; we can therefore ensure that new bound names are
only introduced whenever existing bound names can no longer be
used. Both type systems use finite session types to achieve this for
recursive processes. Informally, a new recursive call can only occur
once all sessions involving the bound names of the current recursive
call have been used up. In the proof of the soundness of the system
for characterizing name-boundedness system, we make use of the fact
that it is a more restrictive version of that for depth-boundedness.

The rest of our paper is organized as follows. Section
\ref{sec:picalculus} describes the $\pi$-calculus that we will
consider; section \ref{sec:notions} introduces the notions of
boundedness. Section \ref{sec:depthbounded} presents a type system for
depth-bounded processes, which is analyzed in sections
\ref{sec:subject-reduction} and \ref{sec:soundness}. Section
\ref{sec:namebounded} presents a type system for name-bounded
processes. Section \ref{sec:other} discusses the relationship with
other classes of processes.

\section{A typed $\pi$-calculus with recursion} \label{sec:picalculus}

We follow Meyer \cite{DBLP:conf/ifipTCS/Meyer08} and use a
$\pi$-calculus with \emph{recursion} instead of replication. The
reason behind this choice of syntax is that we would like infinite
behaviours to make use of bound names in a non-trivial manner that
guarantees boundedness properties. In general, the combination of
restriction and replication in $! \new{x} P$ will result in a process
that fails to be name-bounded.

\subsection{Syntax}

We assume the existence of a countably infinite set of names,
\Names, and let $a,b,\ldots$ and $x,y,\ldots$ range over
\Names. Moreover, we assume a countably infinite set of recursion
variables, \Recs, and let $X,Y, \ldots$ range over \Recs.

\subsubsection{Processes} Following \cite{gayhole} we will use a
version of the $\pi$-calculus with \emph{polarized names} in order to
ensure that the endpoints of a channel will not end up in the same
parallel component. We assume polarities ranged over by $p, q
\ldots$. The polarities $+$ and $-$ are dual; we define $\ol{+} = -$
and $\ol{-} = +$. The empty polarity $\varepsilon$ is self-dual and
used for names used as channels that are not session channels and to
tag name occurrences in the binding constructs of input and
restriction. We call the set of polarized names \Polnames.
 
The formation rules of processes are given by
\begin{align*}
P & ::= \inp{\pol{x}{p}}{y}P_1 \mid \outp{\pol{x}{p}}{\pol{y}{q}}P_1 \mid P_1 \para P_2 \mid
\new{x:T}P_1 \mid \mu\, X.P_1 \mid X \mid \Nil \\
p & ::= + \mid - \mid \epsilon
\end{align*}
As usual, $\inp{\pol{x}{p}}{y}P_1$ denotes a process that inputs a
name on channel $x$ and continues as $P_1$; the unpolarized name $y$
is bound in $P_1$. $\outp{\pol{x}{p}}{\pol{y}{q}}P_1$ is a process
that outputs the polarized name $y^q$ on channel $x$ and continues as
$P_1$.  $P_1 \para P_2$ is the parallel execution of $P_1$ and
$P_2$. $\mu X.P_1$ is a recursive process with body $P_1$. We assume
that every such recursive process is \emph{guarded}; every occurrence
of a recursion variable must be found underneath an input or an output
prefix. In $\mu X.P_1$ the $\mu X$ is called a binding occurrence of
$X$. A process $P$ is \emph{recursion-closed} if every recursion
variable $X$ in $P$ has a binding occurrence for some subprocess
$\mu X.P_1$ and if all recursion variables are distinct. We employ a
notion of typed restriction, which we will now explain.

\subsubsection{Typed restrictions} In the restriction $\new{x:T}P_1$
the unpolarized name $x$ is bound in $P_1$ and annotated with type
$T$. Our set of types \Types is a non-recursive version of the binary
session types introduced by Gay and Hole \cite{gayhole} and defined by
the formation rules
\begin{align*} 
T & ::= S \mid \chan{T} \\
S & ::=  (S_1, S_2) \mid \outpt{T}{S} \mid \inpt{T}{S} \mid \sessend
\end{align*}
A type $T$ can be a \emph{linear} endpoint type $S$ or pair of endpoints
$(S_1,S_2)$, or an \emph{unlimited} channel type $\chan{T}$. An endpoint type $S$
of the form $\outpt{T}{S}$ denotes that a channel of this type can
output a name of type $T$; afterwards, the channel will have type
$S$. An endpoint type of the form $\inpt{T}{S}$ denotes that a channel
of this type can input a name of type $T$; afterwards, the channel
will have type $S$. The special endpoint type $\sessend$ is the type
of an endpoint that allows no further communication. If
$T = ( !T_1.S_2, ?T'_1.S'_2)$ we let $T \downarrow = (S_2,S'_2)$; this
denotes the successor of a pair of endpoint types. If $T = \chan{T_1}$,
then $T \downarrow = T$.

We use the type annotation of restrictions to keep track of the
subject name that led to a reduction and of how the types of bound
names evolve. 

The sets of free and bound names of a process, $\fn{P}$ and $\bn{P}$,
are defined as usual. To simplify the presentation, we assume all free
and bound names distinct. We let $P \subst{y}{x}$ denote the
capture-avoiding substitution that replaces all free occurrences of
$x$ in $P$ by $y$. A name $n \in \bn{P}$ is \emph{active} if it does
not appear underneath a prefix.

\subsubsection{Structural congruence}

Structural congruence is the least congruence relation for the process
constructs that is closed under the axioms in \tabref{tab:scong}.

\begin{table}
\begin{center}
  \begin{tabular}{llll}
    \runa{New-1} & $\new{x:T}\new{y:T'} P \scong
                 \new{y:T'}\new{x:T}P$ & \runa{Nil-1} & $P \para \Nil
                                                   \scong P$ \\[2mm]
\runa{New-2} & $\new{x:T}P \para Q \scong \new{x:T}(P \para Q)$ if $x
               \notin \fn{Q}$ \parbox{15mm}{\mbox{}}& \runa{Nil-2} & $\new{x:T}\Nil \scong
                                               \Nil$ \\[2mm]
\runa{Par-1} & $P \para Q \scong Q \para P$ \\[2mm]
\runa{Par-2} & $(P \para Q) \para R \scong P \para (Q \para R)$ 
  \end{tabular}
\end{center}
\caption{Structural congruence: Axioms and rules}
\label{tab:scong}
\end{table}

Following Meyer \cite{DBLP:conf/ifipTCS/Meyer08}, we sometimes
consider processes in \emph{restricted form}. A process is in inner
normal form, if every restriction $\new{x:T}$ only encloses parallel
components that contain $x$.  A process is in outer normal form if
every restriction not underneath a prefix appears at the outermost
level.

\begin{definition}[Normal forms]
Let $P$ be a process.
\begin{itemize}
\item $P$ is in \emph{inner normal form} if for every subprocess
  $\new{x:T}(P_1 \para \cdots \para P_k)$ where none of the $P_i$ are
  parallel compositions of processes, we have $x \in \fn{P_i}$ for all
  $1 \leq i \leq k$.
\item$P$ is in \emph{outer normal form} if
  $P = \new{x_1} \ldots \new{x_k}P_1$ such that $x_i \in \fn{P_1}$ for
  all $1 \leq i \leq k$ and such that all restrictions in $P_1$ 
  appear underneath prefixes.
\end{itemize}
\end{definition}

\begin{proposition}
For every process $P$ we can construct a process $P_1 \scong P$ in
inner normal form and a process $P_2 \scong P$ in outer normal form.
\end{proposition}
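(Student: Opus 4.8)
The plan is to prove both claims by structural induction on the process $P$, using the axioms of structural congruence in \tabref{tab:scong} to rearrange restrictions. The two normal forms are constructed by complementary strategies: for the \emph{inner} normal form I would push restrictions inward, as deeply as possible past parallel components that do not use the restricted name; for the \emph{outer} normal form I would pull active restrictions outward to the top level. In both cases the key structural rule is \runa{New-2}, $\new{x:T}P \para Q \scong \new{x:T}(P \para Q)$ when $x \notin \fn{Q}$, read in one direction for the inner form and the other for the outer form, together with \runa{New-1} for reordering adjacent restrictions and \runa{Par-1}, \runa{Par-2} for commuting and reassociating parallel components.

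For the \emph{inner normal form}, I would proceed by induction on the structure of $P$. The interesting case is $P = \new{x:T}P_1$. By the induction hypothesis $P_1 \scong Q_1$ for some $Q_1$ in inner normal form. Writing $Q_1$ as a parallel composition $R_1 \para \cdots \para R_k$ of non-parallel components (using \runa{Par-2} to flatten associativity and \runa{Nil-1} to discard any $\Nil$ summands), I would partition the $R_i$ according to whether $x \in \fn{R_i}$. Using \runa{Par-1} and \runa{Par-2} to group the components that do not mention $x$ into a single process $Q$ with $x \notin \fn{Q}$, the axiom \runa{New-2} lets me slide $\new{x:T}$ inward so that it encloses only those components that do contain $x$. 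If no component mentions $x$, then $x \notin \fn{Q_1}$ and \runa{Nil-2} together with \runa{New-2} removes the restriction entirely. The remaining cases, for prefixes, parallel composition, recursion, and $\Nil$, follow directly by applying the induction hypothesis to the immediate subprocesses, since a restriction appearing underneath a prefix need not be touched.

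For the \emph{outer normal form}, the induction is again on the structure of $P$, but now I collect the \emph{active} restrictions (those not underneath a prefix) and float them to the outermost level. For $P = P_1 \para P_2$, by induction each $P_i \scong \new{x_1^i}\cdots\new{x_{k_i}^i} P_i'$ with all restrictions in $P_i'$ guarded by prefixes; after $\alpha$-renaming to keep all bound names distinct, repeated application of \runa{New-2} (in the scope-extrusion direction, legitimate because the $x_j^i$ are not free in the other component) and \runa{New-1} pulls every restriction to the front, yielding the desired prefix of restrictions over $P_1' \para P_2'$. The case $P = \new{x:T}P_1$ simply prepends $\new{x:T}$ to the outer normal form of $P_1$, and prefix, recursion, and $\Nil$ cases are immediate, since under a prefix all restrictions are already guarded. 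In both constructions I would finally discharge the side conditions of the definitions: for the inner form that every enclosed non-parallel component contains $x$, and for the outer form that each $x_i$ is free in the body, using \runa{Nil-2} to drop any vacuous restriction whose name is not actually used.

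The main obstacle I anticipate is bookkeeping rather than conceptual difficulty: ensuring the freshness and distinctness hypotheses (``all free and bound names distinct'') are maintained under the repeated scope manipulations, so that the side conditions $x \notin \fn{Q}$ in \runa{New-2} and the name-disjointness needed to commute restrictions with \runa{New-1} are always satisfied. Care is also needed so that the inner-form partition genuinely yields components each containing $x$, which may require iterating the inward-pushing step until a fixed point is reached; I would phrase this as an auxiliary induction on the number of active restrictions to guarantee termination of the rewriting.
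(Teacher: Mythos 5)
Your overall strategy---structural induction over $P$, using \runa{New-1}, \runa{New-2}, \runa{Par-1}, \runa{Par-2} and the derived identity $\new{x:T}Q \scong Q$ for $x \notin \fn{Q}$---is the evident intended argument (the paper states the proposition without giving a proof), and your treatment of the \emph{inner} normal form is sound. One remark there: the fixed-point iteration you worry about at the end is unnecessary, because the definition of inner normal form only constrains the immediate parallel components directly under each restriction, so one level of inward pushing per restriction, combined with the induction hypothesis on the components, already suffices.

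The genuine gap is in the \emph{outer} normal form, in the recursion case, which you dismiss as ``immediate, since under a prefix all restrictions are already guarded.'' But $\mu X$ is not a prefix in this calculus: guardedness is assumed only for recursion \emph{variables}, not for restrictions, so a process may contain a restriction inside a recursion body that lies under no input or output prefix. The paper's own example $P_1 = \mu X. \new{r_1}( \outp{\pol{r_1}{+}}{a}X \para \inp{\pol{r_1}{-}}{x}X)$ has exactly this shape. For such a process your induction breaks down: the restriction $\new{r_1}$ sits at the top level of the recursion body, but no axiom of \tabref{tab:scong} commutes a restriction past $\mu$, so it cannot be floated to the outermost level of $\mu X.(\cdots)$---and it must not be, since $\new{r_1}\mu X.(\cdots)$ and $\mu X.\new{r_1}(\cdots)$ have genuinely different behaviour (in the latter every unfolding creates a fresh $r_1$, which is precisely why $P_1$ fails to be name-bounded). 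Consequently, either ``all restrictions in $P_1$ appear underneath prefixes'' must be read more loosely as ``guarded, i.e.\ underneath a prefix or inside a recursion body,'' in which case your recursion case does go through but needs to say exactly this rather than appeal to prefixes, or else the outer-normal-form half of the proposition is simply false for processes such as $P_1$. As written, your proof papers over the one case where the real difficulty lies, and the justification it offers for that case is incorrect.
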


\subsection{An annotated reduction semantics}

We define the behaviour of processes by an annotated reduction
semantics that keeps track of when recursive unfoldings are
necessary. Reductions are of the form $P \trans{\alpha} P'$ where
either $\alpha = \set{a}, a \in \Names$ or $\alpha = \set{\Rec,a}$ for
$a \in \Names$. The latter annotation indicates that recursive
unfolding was necessary to obtain the reduction. We define
$\n{\set{a}} = a$ and $\n{\set{\Rec,a}} = a$. The reduction rules are
found in \tabref{tab:red}. Note that in the rule \runa{New-Annot} the
type associated with the bound name $x$ evolves, if $x$ if $x$ is
responsible for the communication and $T$ is a session type.

If $P$ reduces to $P'$ in zero or more reduction steps, we write $P \ra^* P'$.

\begin{table}
\begin{center}
\begin{tabular}{ll}
\runa{Com-Annot} & $\inp{\pol{a}{p}}{x}P_1 \para \outp{\pol{a}{\overline{p}}}{\pol{y}{q}}P_2 \trans{\set{a}} P_1
             \subst{\pol{y}{q}}{x} \para P_2$  \\[3mm]
\runa{Par-Annot} & \infrule{P \trans{\alpha} P'}{P \para Q \trans{\alpha} P' \para Q} \\
\runa{New-Annot} & \infrule{P \trans{\alpha} P'}{\new{x: T}P \trans{\alpha} \new{x:T'}P'}
                   where \begin{tabular}{l} $T = T'$ if $x \notin \alpha$
                           \\ $T' = T \downarrow$
                   if $x \in \alpha$ \end{tabular} \\
\runa{Unfold-Annot} & \infrule{P > Q \quad Q \trans{\alpha} P'}{P
                      \trans{\set{\Rec} \cup \alpha} P'} \\
\runa{Struct-Annot} & \infrule{P \scong Q \quad Q \trans{\alpha} Q' \quad Q' \scong
                P'}{P \trans{\alpha} P'}
\end{tabular}
\end{center}
\caption{Annotated reduction rules}
\label{tab:red}
\end{table}

Recursion is described by an unfolding relation which we define in
\tabref{tab:unfold}.  In the definition, we use the notion of
\emph{unfolding contexts}. An unfolding context $C[\;]$ is an
incomplete process terms whose hole indicates where a prefix that
participates in a reduction step appears as the direct result of
unfolding a recursive process.

\begin{definition}[Unfolding contexts]
  The set of unfolding contexts is given by the formation rules
\[ C ::= [\; ] \! \para \! P \, \mid \, \new{x:T}C \]
\end{definition}

\begin{table}[h]
  \centering
  \begin{tabular}[c]{llp{1cm}ll}
    \runa{Unfold} & $\mu X.P > P[\mu X.P / X]$ & &
    \runa{Context} & \infrule{P > P'}{C[P] > C[P']}
  \end{tabular}
  \caption{The rules for unfolding}
  \label{tab:unfold}
\end{table}

\begin{example}
We can write the process
\[ P \defeq \new{c:T} \mu X. \inp{a}{x}\outp{x}{x}X \para \mu
Y. \new{b:U} \outp{a}{b}\inp{x}{y} Y \]
as 
\[ C_1[\mu X. \inp{a}{x}\outp{x}{x}X] \text{ where } C_1 = [\new{c:T} [ ] \para \mu
Y.  \new{b:U} \outp{a}{b}\inp{x}{y} Y] \]
or 
\[ C_2[\mu Y. \new{b:U}  \outp{a}{b}\inp{x}{y} Y] \text{ where } C_2 = \new{c:T} \mu X. \inp{a}{x}\outp{x}{b}X \para [].\]
\end{example}

\section{Notions of boundedness} \label{sec:notions}

Meyer introduces three notions of boundedness
\cite{DBLP:conf/ifipTCS/Meyer08} for the $\pi$-calculus, and we now
introduce them.

\paragraph{Depth-bounded processes}

A process $P$ is depth-bounded if every configuration reachable from
it can be rewritten so as to have no more than $k$ nested
restrictions. To define this, we first introduce a function $\nest{P}$
that counts the maximal number of active nested restrictions. A
restriction is active if it does not occur underneath a prefix -- this
is similar to \cite{DBLP:journals/corr/DOsualdoO15}.

\begin{definition}The $\textsf{nest}$ function is defined by the clauses
\begin{align*}
  \nest{\Nil}  & = 0  &&
  \nest{X}   = 0 \\
\nest{\new{x:T}P}  & = 1 + \nest{P} && 
\nest{P_1 \para P_2}  = \max(\nest{P_1},\nest{P_2}) \\
\nest{\mu X.P_1}  & = \nest{P_1} &&
\nest{\inp{\pol{x}{p}}{y}P_1}   = \nest{\outp{\pol{x}{p}}{\pol{y}{q}}P_1} = 0
\end{align*}
\end{definition}

The restriction depth of a process is then the minimal nesting depth
up to structural congruence.

\begin{definition}The depth of a process $P$ is given by \[\depth{P} =
    \min \setof{\nest{Q}}{Q \scong P}. \]©
\end{definition}
We define a normalization ordering $\succ$ on processes that removes
bound names not found in a process. It is generated by the axiom
\[  \new{x}P \succ P \quad \text{if } x \not\in \fn{P} \]
and closed under structural congruence. A process $P$ is
\emph{normalized} if it has no superfluous bound names, that is, if
$P \not\succ$; we write $P \leadsto Q$ if $P \succ^* Q$ and $Q$ is
normalized.  \footnote{Note that
  $\new{x:T}P \scong P \quad \text{if } x \not\in \fn{P}$ is a derived
  identity if we include the axiom $\new{x}\Nil \scong \Nil$.}

\begin{definition}[Depth-bounded process]
 A process $P$ is depth-bounded if there is a $k \in \Nat$ such that
for every $P'$ where $P \ra^* P'$ we have that for some $P''$ with
$P'' \scong P'$ we have $\depth{P'} \leq k$.
\end{definition}

\paragraph{Name-boundedness}

A process $P$ is \emph{name-bounded} if there exists a constant
$k \in \Nat$ such that whenever $P \ra^* P'$ and $P' \leadsto P''$, then
$P''$ has at most $k$ restrictions. It is obvious that every
name-bounded process is also depth-bounded.

\begin{example}
The term
\[ P_1 = \mu X. \new{r_1}( \outp{\pol{r_1}{+}}{a}X \para \inp{\pol{r_1}{-}}{x}X) \]
is depth-bounded with $\depth{P_1} = 1$. The term
\[ P_2 = \mu X. \new{r_1}\new{r_2}( \outp{\pol{r_1}{+}}{r_2}X \para \inp{\pol{r_1}{-}}{x}X \para
\outpend{\pol{r_2}{+}}{r_1} \para \inpend{\pol{r_2}{-}}{x}) \]
is depth-bounded with  $\depth{P_2} = 2$. Neither $P_1$ nor $P_2$ is name-bounded.
\end{example}

\paragraph{Width-boundedness}

A third notion of boundedness is that of \emph{width-boundedness}. A
process $P$ is width-bounded if there exists a constant $k \in \Nat$
such that whenever $P \ra^* P'$ we have that every bound name in $P'$
occurs in at most $k$ parallel components. This coincides with the notion of
  \emph{fencing} recently used by Lange et
  al. \cite{Lange:2017:FOG:3009837.3009847} introduced in their
  analysis of Go programs.

\section{Using session types for depth-boundedness}\label{sec:depthbounded}

We now present a session type system that gives a sound
characterization of depth-boundedness. Our account of binary session
types similar to that used by Gay and Hole \cite{gayhole}.

\subsection{Types and type environments}\label{subsec:types}

Our type judgements are of the form $\Tenv , \Renv\vdash P$, where $\Tenv$
contains the type bindings of the free polarized names in $P$. A type
judgment is to be read as stating that $P$ is well-behaved using the
type information found in the type environment $\Tenv$ and the
recursion environment $\Renv$ (explained in Section \ref{sec:rec-env}).

\begin{definition}
  A type environment $\Tenv$ is a partial function
  $\Tenv: \Polnames \pra \Types$ with finite support.
\begin{itemize}
\item $\Tenv$ is \emph{unlimited} if for every
  $x \in \dom{\Tenv}$ we have $\Tenv(x) = \chan{T}$ for some $T$ or
  $\Tenv(x) = \sessend$ 
\item $\Tenv$ is \emph{linear} if for every $x \in \dom{\Tenv}$ we
  have that $\Tenv(x) \neq \chan{T}$ for all $T$. We let \Tenvlin
  denote the largest sub-environment of \Tenv that is linear.
\item  If for every $x \in \dom{\Tenv}$ we have that
  $\Tenv(x) = \sessend$ or $\Tenv(x) = (\sessend,\sessend)$, we say
  that $\Tenv$ is \emph{terminal}.
\end{itemize}
\end{definition}

We define duality of endpoint types in the usual way (note that
duality is not defined for base types).

\begin{definition}[Duality of endpoint types]
  Duality of endpoint types is defined inductively by
  \begin{align*}
    \ol{\outpt{T}{S}}  = ?T.\ol{S} && 
    \ol{\inpt{T}{S}}  = !T.\ol{S}  &&
    \ol{\sessend}  = \sessend
  \end{align*}
\end{definition}

A type $T = (S_1,S_2)$ is \emph{balanced} if $S_1 = \ol{S_2}$. A type
environment $\Tenv$ is balanced if for all $x \in \dom{\Tenv}$ we have
that $\Tenv(x)$ is a balanced type or a base type $B$.

\begin{definition}[Depth of types]
The depth of an endpoint type $S$ is denoted $d(S)$ and is defined
inductively by
  \begin{align*}
    d(\outpt{T}{S})  = 1 + d(S) &&
    d(\inpt{T}{S})  = 1 + d(S) &&
    d(\sessend)  = 0
  \end{align*}
For a type $T = (S_1,S_2)$ we let $d(T) = \max(S_1,S_2)$. For all
other $T$, we define $d(T) = 0$.
\end{definition}
\begin{definition}[Addition of type environments]
  Let $\Tenv_1$ and $\Tenv_2$ be type environments such that
  $\dom{\Tenv_1} \cap \dom{\Tenv_2} = \emptyset$. Then $\Tenv_1 +
  \Tenv_2$ is the type environment $\Tenv$ that satisfies
\[ 
\Tenv(x) = \begin{cases} 
\Tenv_1(x) & \text{if } x \in \dom{\Tenv_1} \setminus \dom{\Tenv_2} \\
\Tenv_2(x) & \text{if } x \in \dom{\Tenv_2} \setminus \dom{\Tenv_1} 
\end{cases}
\]
\end{definition}

\subsection{Recursion and recursion environments}\label{sec:rec-env}

In our type system, recursion variables are typed with type
environments. A \emph{recursion environment} \Renv is a function that
to each recursion variable $X$ assigns a type environment \Tenv. The
idea is that \Tenv will represent the names and associated types needed to
type a process $\mu X.P$.

\begin{definition}\label{def:rec-env}
  A recursion environment \Renv is a partial function $\Renv: \Recs
  \pra (\Polnames \pra \Types)$ with finite support. We let \emptyRenv
  denote the empty recursion environment.
\end{definition}

\begin{definition}  
  Let $\Renv_1$ and $\Renv_2$ be recursion environments where for all
  $X \in \dom{\Renv_1} \cap\dom{\Renv_2}$ we have
  $\Renv_1(X) = \Renv_2(X)$. $\Renv_1 + \Renv_2$ is the recursion
  environment \Renv satisfying
\[ \Renv(X) = \begin{cases}
\Renv_1(X) & \text{if } X \in \dom{\Renv_1} \setminus \dom{\Renv_2} \\
\Renv_2(X) & \text{if } X \in \dom{\Renv_2} \setminus
\dom{\Renv_1} \\
\Renv_1(X) & \text{otherwise} \end{cases}
\]
\end{definition}

\subsection{Type rules}

The set of valid type judgments is defined by the rules in
\tabref{tab:depthbounded}. The type rules differ from the rules from
standard session type systems in their treatment of recursion in two
ways.

The rule \runa{Var} ensures that a recursion variable $X$ can only be
well-typed for $\Tenv$ and $\Renv$ if the type environment $\Tenv_1$ associated
with $X$ mentions all the names in $\Tenv$. Moreover, the rule
requires that the linear part of the type environment must be
\emph{terminal} and that the linear names present when a recursion
variable $X$ is reached include the ones found in the type environment
used to type the process $\mu X.P$. Therefore, when a recursion
variable is reached and a recursive call is made, the restricted names
in the unfolding will be new: the existing sessions have been ``used
up''.

The rule \runa{Chan} ensures that channels that are not session
channels can only be bound within a non-recursive process, as the
recursion environment present must be $\emptyRenv$. Therefore, names
that are not session names cannot accumulate because of recursive
calls and lead to an unbounded restriction depth.

\begin{table}
  \begin{tabular}{lp{6.5cm}ll}
    \runa{In-1} & \infrule{\Tenv, \pol{x}{p} : T_2, y : T_1 , \Renv \vdash P}{\Tenv, \pol{x}{p} :
                \inpt{T_1}{T_2} , \Renv \vdash \inp{\pol{x}{p}}{y}P} &
    \runa{In-2} & \infrule{\Tenv,  \pol{x}{p} : \chan{T_1}, y : T_1 , \Renv
                  \vdash P}{\Tenv, \pol{x}{p} : \chan{T_1} , \Renv
                  \vdash \inp{x}{y}P} \\
 & where $T_1 \neq \sessend$ & & where $T_1 \neq \sessend$ \\
 \runa{Out-1} &
                                                            \infrule{\Tenv,
                                                             \pol{x}{p} : T_2 , \Renv \vdash
                                                            P}{\Tenv,
                                                            \pol{x}{p} :
                                                            \outpt{T_1}{T_2},
                                                            \pol{y}{q} : T_1
                                                            , \Renv \vdash
                                                            \outp{\pol{x}{p}}{\pol{y}{q}}P}
    &
  \runa{Par} & \infrule{\Tenv_1  , \Renv_1 \vdash P_1 \quad \Tenv_2 , \Renv_2 \vdash
               P_2}{\Tenv_1 + \Tenv_2  , \Renv_1 + \Renv_2 \vdash
               P_1 \para P_2}\\
& $T_1 \neq \sessend$ & & \\
\runa{Out-2} & \infrule{\Tenv, x : \chan{T_2}, \pol{y}{q} : T_2 , \Renv \vdash P}{\Tenv, x :
                  \chan{T_2}, \pol{y}{q} : T_2 , \Renv \vdash
               \outp{x}{\pol{y}{q}}P} &
\runa{Session} & \infrule{\Tenv, x^{+} : S, x^{-} : \overline{S} ,
                 \Renv \vdash P}{\Tenv , \Renv \vdash \new{x :
                 (S,\overline{S} )}P}
    \\
& where $T_2$ unlimited & & \\
  \runa{Nil} & $\Tenv , \Renv \vdash \Nil \qquad \Tenv \; \text{unlimited}$
                                             &
 \runa{Var} & $\Tenv, \Renv \vdash X \quad {\begin{array}{l} \Renv(X)
                                               = \Tenv_1 \\ \dom{\Tenv} \subseteq \dom{\Tenv_1} \\
                   \Tenvlin \text{ is terminal} \end{array}}$
    \\[5mm]
\runa{Rec}
                                                          &
                                                            \infrule{\Tenv,
                                                             \Renv, 
                                                            X : \Tenv \vdash
                                                            P}{\Tenv
                                                            , \Renv \vdash \mu
                                                            X.P} &    
\runa{Chan} & \infrule{\Tenv, x : \chan{T} ,
                 \emptyRenv \vdash P}{\Tenv , \emptyRenv \vdash \new{x : \chan{T}}P}     \\                                                  
  \end{tabular}
\caption{Type rules for depth-boundedness}
\label{tab:depthbounded}
\end{table}

The need for private names to be linear inside a recursive process
arises because an unlimited channel can be exploited by a recursive
process to introduce unbounded nesting, as the following example from
\cite{DBLP:journals/corr/DOsualdoO15} illustrates.

\begin{example}
  Consider the following process that cannot be typed; we therefore
  leave out type annotations and polarities in its description. Let 
        \[ P =
  \new{s}\new{n}\new{v}\new{a}(\outpend{s}{a} \para \mu
S. (\inp{s}{x}\new{b} ((\outp{v}{b}\outpend{n}{x} \para
\outpend{s}{b}) \para S))) \]
The process can evolve as follows.
\[ P \ra^{*} \new{s}\new{n}\new{v}\new{a}(P_1 \para \new{b}\new{b'}
((\outp{v}{b}\outpend{n}{a} \para \outp{v}{b'}\outpend{n}{b} \para
\outpend{s}{b'})) \]
where $P_1 = \mu
S. (\inp{s}{x}\new{b}((\outp{v}{b}\outpend{n}{x} \para
\outpend{s}{b}) \para S))$ can introduce further nesting since the
channel $s$ will, when used together with
recursion, be used with an arbitrary number of new names that
cannot be eliminated.
\end{example}

Note that the \runa{Par} rule implies that a process $P$ that can be
typed in a linear environment must be width-bounded with bound $2$,
since every name can then occur in either precisely one or precisely
two parallel components.

Delegation of session names is handled by \runa{Out-1}; session
channels are linear, so the name $\pol{y}{p}$ cannot appear in the
continuation $P$. A special feature of our type system is that
endpoint channels that are no longer usable cannot be delegated. Thus,
in the rules \runa{In-1}, \runa{In-2}, and \runa{Out-1}, the object
type $T_1$ must be different from $\sessend$.

\section{A subject reduction property} \label{sec:subject-reduction}

To show our characterization of depth-boundedness, we state a type
preservation property: For any well-typed process $P$, the type of the
channel that gives rise to a reduction of $P$ will evolve according
to its session type. 

Since this channel may be a restricted channel, we must also describe
how the session types of restricted channels evolve. Every process in
which all bound names are pairwise distinct gives rise to an internal
type environment (Definition \ref{def:internal}) that collects the types of the bound names; this is
an overapproximation of the types of the active names in the
process. This environment is defined as follows.

\begin{definition}\label{def:internal}
  Let $P$ be a process whose bound names are pairwise
  distinct. $\Tenv_P$ denotes the internal type environment of $P$; it
  is defined by the following clauses (where $\pi$ denotes a prefix).
  \begin{align*}
    \Tenv_{P_1 \para P_2}  = \Tenv_{P_1}, \Tenv_{P_2} &&
    \Tenv_{\new{x:T}P}  = x : T, \Tenv_{P} \\
    \Tenv_{\mu X.P}  = \Tenv_{P} &&
    \Tenv_{\pi.P}  = \Tenv_{P} \\
    \Tenv_{X} = \emptyset
  \end{align*}
\end{definition}

The following substitution lemma for variables tells us about the annotated reductions of open
process terms.

\begin{lemma}[Substitution of variables in reductions]
  If $P[\mu X.P / X] \trans{\set{x}} P'$ then $P \trans{\set{x}} P''$, with $P' =
  P''[\mu X.P / X]$.
\end{lemma}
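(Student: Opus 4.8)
The plan is to prove a slightly more general statement in which the substituted term is treated as a fixed parameter, decoupled from the term being reduced. Concretely, I will show: for any process $R$ with no free occurrence of $X$ (and whose free names are disjoint from the binders encountered, as guaranteed by the all-names-distinct convention) and any process $Q$ in which every free occurrence of $X$ is \emph{guarded}, i.e. occurs underneath a prefix, if $Q[R/X] \trans{\set{x}} P'$ then there is a $Q''$ with $Q \trans{\set{x}} Q''$ and $P' = Q''[R/X]$. The lemma is the instance $Q = P$ and $R = \mu X.P$: recursion in $P$ is guarded by assumption, so every free $X$ in $P$ is guarded, and $\mu X.P$ has no free $X$. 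The generalization is needed because in the inductive cases the substituted term must stay $R$ while the reduced term shrinks to a subterm, which the original formulation cannot express.

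The crucial observation is that the label $\set{x}$ contains no $\Rec$, and the rules \runa{Com-Annot}, \runa{Par-Annot}, \runa{New-Annot} and \runa{Struct-Annot} all pass the label through unchanged, whereas \runa{Unfold-Annot} always introduces $\Rec$ into the label. Hence no instance of \runa{Unfold-Annot} occurs anywhere in the derivation of $Q[R/X] \trans{\set{x}} P'$, so no fresh copy of $R$ is created and the top-level structure of $Q$ is preserved under the substitution. I will therefore proceed by induction on this derivation. In the base case \runa{Com-Annot}, guardedness forces $Q = \inp{\pol{a}{p}}{z}N_1 \para \outp{\pol{a}{\overline p}}{\pol{y}{q}}N_2$, since the top-level prefixes cannot arise from substituting $X$; thus $Q$ fires the same communication, and the identity $P' = Q''[R/X]$ reduces to the commutation of name substitution with $[R/X]$, namely $(N_1\subst{\pol{y}{q}}{z})[R/X] = (N_1[R/X])\subst{\pol{y}{q}}{z}$, which holds because names and recursion variables are syntactically disjoint and the distinct-names convention rules out capture. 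The cases \runa{Par-Annot} and \runa{New-Annot} follow from the induction hypothesis together with the fact that $[R/X]$ distributes over parallel composition and restriction (using $y \notin \fn{R}$ for the bound name); guardedness of $X$ is inherited by the relevant subterm, so the hypothesis applies.

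The main obstacle is the case \runa{Struct-Annot}, where $Q[R/X] \scong Q_1'$, $Q_1' \trans{\set{x}} Q_2'$ and $Q_2' \scong P'$. Handling it requires an auxiliary \emph{congruence-reflection} lemma: if $X$ is guarded in $Q$ and $Q[R/X] \scong S$, then there is a $Q_1$ with $X$ guarded in $Q_1$, $Q \scong Q_1$ and $S = Q_1[R/X]$. I will prove this by induction on the derivation of structural congruence, checking each axiom of \tabref{tab:scong}; the only delicate point is \runa{New-2}, whose side condition must be transported across the substitution, which is sound because $\fn{Q[R/X]} \subseteq \fn{Q} \cup \fn{R}$ and the bound name is fresh for $R$. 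I will also record that structural congruence preserves guardedness, since none of the axioms move a subterm into or out of a prefix.

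Granting congruence-reflection, the \runa{Struct-Annot} case closes as follows: applying it to the first congruence yields $Q_1$ with $Q \scong Q_1$ and $Q_1' = Q_1[R/X]$; the induction hypothesis on $Q_1[R/X] \trans{\set{x}} Q_2'$ gives $Q_1 \trans{\set{x}} Q_2''$ with $Q_2' = Q_2''[R/X]$; and applying congruence-reflection once more to $Q_2''[R/X] \scong P'$ produces the final $Q''$ with $Q_2'' \scong Q''$ and $P' = Q''[R/X]$. Chaining $Q \scong Q_1 \trans{\set{x}} Q_2'' \scong Q''$ through \runa{Struct-Annot} then delivers $Q \trans{\set{x}} Q''$ with $P' = Q''[R/X]$, completing the induction and hence the lemma.
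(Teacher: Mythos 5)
Your overall strategy is the right one, and it is considerably more explicit than the paper's own proof, which consists of the single sentence ``Induction in the structure of $P$.'' In particular, the generalization decoupling the substituted term $R$ from the reduced term, and the observation that \runa{Unfold-Annot} can occur nowhere in a derivation whose label is $\set{x}$ (that rule forces $\Rec$ into the label, and the other rules pass labels through unchanged), are exactly what is needed; your cases \runa{Com-Annot}, \runa{Par-Annot} and \runa{New-Annot} go through as described.

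The gap is your congruence-reflection lemma: it is false as stated. The paper defines $\scong$ as the least \emph{congruence} for the process constructs, so $\scong$ is closed under prefix and recursion contexts, not merely under the axioms of \tabref{tab:scong}. Your justification (``none of the axioms move a subterm into or out of a prefix'') covers the axioms but not the congruence closure, and that is exactly where reflection fails: $\scong$ can rewrite \emph{strictly inside} a substituted copy of $R$, which sits under a prefix, and such a rewrite cannot be pulled back to $Q$, because a bare recursion variable is congruent to no prefixed or recursive term. Concretely, take $Q = \inp{\pol{b}{+}}{w}X$ and $R = \mu X.\inp{\pol{a}{+}}{z}(X \para \Nil)$; applying \runa{Nil-1} inside the copy of $R$ gives
\[ Q[R/X] \;\scong\; \inp{\pol{b}{+}}{w}\mu X.\inp{\pol{a}{+}}{z}X \;=\; S, \]
and there is no $Q_1$ with $Q \scong Q_1$ and $S = Q_1[R/X]$: since $R$ is not a subterm of $S$, such a $Q_1$ would have to be $S$ itself, and $Q \scong S$ does not hold (structural congruence preserves the number of input prefixes, and $Q$ has one while $S$ has two). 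The same phenomenon defeats the syntactic identity in the lemma itself once \runa{Struct-Annot} is in play: for $P = \inp{\pol{b}{+}}{w}(X \para \Nil) \para \outp{\pol{b}{-}}{\pol{c}{\epsilon}}\Nil$ one derives $P[\mu X.P/X] \trans{\set{b}} R' \para \Nil$ with $R' = \mu X.(\inp{\pol{b}{+}}{w}X \para \outp{\pol{b}{-}}{\pol{c}{\epsilon}}\Nil)$, yet $R' \para \Nil$ is not of the form $P''[\mu X.P/X]$ for any reduct $P''$ of $P$, since every such reduct is congruent to $X \para \Nil \para \Nil$ and $\mu X.P$ does not occur in $R' \para \Nil$. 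So the conclusion of both your generalized statement and the reflection lemma must be weakened to congruence, $P' \scong P''[\mu X.P/X]$ (which is all that is ever needed downstream, as reducts are only used up to $\scong$). With that weakening, the clean repair is to fold $\scong$ into the induction hypothesis --- ``if $T \scong Q[R/X]$ and $T \trans{\set{x}} P'$, then $Q \trans{\set{x}} Q''$ with $P' \scong Q''[R/X]$'' --- so that \runa{Struct-Annot} is discharged by transitivity of $\scong$; the cost is that \runa{Com-Annot} then needs an inversion lemma relating the unguarded prefixes of $T$ to those of $Q[R/X]$. (Your reflection lemma would be sound if $\scong$ were closed only under parallel composition and restriction, since guardedness would then shield every copy of $R$; but that is not the definition the paper gives.)
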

\begin{proof}
Induction in the structure of $P$.
\end{proof}

\begin{lemma}[Substitution of variables in typings of recursion]
Suppose $\Tenv, \Renv \vdash \mu X.P$ and $\Tenv, \Renv \vdash Q$. Then $\Tenv, \Renv
\vdash Q [\mu X.P / X]$.
\end{lemma}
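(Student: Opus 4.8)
The plan is to argue by induction on the derivation of the second hypothesis $\Tenv, \Renv \vdash Q$, with a case analysis on the last rule applied. Since $[\mu X.P / X]$ rewrites only occurrences of the recursion variable $X$ and commutes with every process constructor, all cases except \runa{Var} are mechanical: I apply the induction hypothesis to the premises and re-apply the same rule, using that the side conditions (the shape of the object types in \runa{In-1}, \runa{In-2}, \runa{Out-1}, that $\Tenv$ is unlimited in \runa{Nil}, and that $\Renv = \emptyRenv$ in \runa{Chan}) constrain only the environments, which the substitution leaves untouched. Two cases are trivial for a structural reason: if $X \notin \dom{\Renv}$ then $X$ cannot occur free in $Q$, so the substitution is the identity; and \runa{Chan} forces $\Renv = \emptyRenv$, which again precludes a free $X$. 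In \runa{Rec}, forming $Q = \mu Y.Q_1$, the bound variable $Y$ is distinct from $X$, so the substitution descends into $Q_1$ and the induction hypothesis applies under the recursion environment extended with $Y$.

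The only genuine case is $Q = X$, where $X[\mu X.P / X] = \mu X.P$, so the goal is to re-derive the typing of $\mu X.P$ under the environments labelling this leaf. To keep the induction sound I would strengthen the statement so that, with $\Tenv, \Renv \vdash \mu X.P$ held fixed, the conclusion is asserted for all environments $\Tenv', \Renv'$ that type $Q$ and satisfy the invariant $\Renv'(X) = \Renv(X)$. This invariant is preserved by every rule: the prefix rules and \runa{Session} leave $\Renv$ unchanged; \runa{Par} splits $\Renv$ into summands that, by the definition of $+$ on recursion environments, must agree on $X$; and \runa{Rec} only adds a binding for the fresh $Y \neq X$. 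With the invariant in place the compositional cases close by applying the strengthened hypothesis to the subterms and reassembling with the same rule, so no typing of $\mu X.P$ is needed except at the \runa{Var} leaves themselves.

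I expect the main obstacle to be precisely the $Q = X$ leaf, because the type environment $\Tenv'$ labelling it need not coincide with the $\Tenv$ under which $\mu X.P$ was originally typed: \runa{Var} only requires $\dom{\Tenv'} \subseteq \dom{\Renv'(X)}$ together with $\Tenvlin$ terminal. What I have to establish is that $\Tenv', \Renv' \vdash X$ (by \runa{Var}, with $\Renv'(X) = \Renv(X)$) is enough to re-derive $\Tenv', \Renv' \vdash \mu X.P$. The route I would take is to invert the \runa{Rec} step behind the fixed judgement, obtaining the body typing $\Tenv, (\Renv, X : \Tenv) \vdash P$, and then transport it across the change of type environment $\Tenv \mapsto \Tenv'$. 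The side conditions of \runa{Var} and \runa{Rec} are exactly what should make this reconciliation possible — the terminal linear part at the recursion point and the domain inclusion against $\Renv(X)$ control how the session channels recorded for $X$ may be re-presented, while the unlimited names can be weakened or strengthened freely — and verifying this transport, together with the terminal/unlimited bookkeeping it entails, is where I anticipate the real work of the lemma to lie. Once it is in hand, feeding it into the $Q = X$ case closes the induction.
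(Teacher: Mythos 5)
Your overall plan --- induction on the typing derivation of $Q$, with every constructor case mechanical and the only real content at the leaf $Q = X$ --- is the same induction the paper performs, and your diagnosis of why the naive version is unsound is sharp: the paper itself applies the induction hypothesis to premises such as $\Tenv_1, \Renv \vdash Q_1$ or $\Tenv_1, a : T_2, x : T_1, \Renv \vdash Q_1$, whose type environments differ from the $\Tenv$ of the fixed hypothesis $\Tenv, \Renv \vdash \mu X.P$, and then declares the case $Q = X$ ``immediate.'' Those two moves are only simultaneously legitimate if the statement is strengthened in roughly the way you propose, so you have correctly located the weak point of the paper's own argument.

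However, your proposal does not close that weak point; it only names it. Everything now rests on the ``transport'' claim: from $\Tenv, \Renv \vdash \mu X.P$ and a leaf judgement $\Tenv', \Renv' \vdash X$ (with $\Renv'(X) = \Renv(X)$, $\dom{\Tenv'} \subseteq \dom{\Renv'(X)}$, and the linear part of $\Tenv'$ terminal), derive $\Tenv', \Renv' \vdash \mu X.P$. You defer this as ``the real work,'' but it is not routine bookkeeping, and as stated it is doubtful: re-deriving $\Tenv', \Renv' \vdash \mu X.P$ via \runa{Rec} requires retyping the body $P$ under $\Tenv'$, whose linear entries \runa{Var} forces to be terminal, whereas $P$ in general needs its free session names at the (non-terminal) types they carry in $\Tenv$; moreover \runa{Rec} would now bind $X : \Tenv'$ rather than $X : \Tenv$, changing the \runa{Var} side conditions at the inner leaves. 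Making this go through needs ingredients you have not supplied --- for instance an argument that the free linear names of a typable $\mu X.P$ are already terminal in $\Tenv$, together with weakening lemmas for unlimited and $\sessend$-typed names --- or else a reformulated invariant that carries typability of $\mu X.P$ itself down the induction. Until then the proof is incomplete at exactly the point you identified as its crux; the paper's proof, which never confronts this point, cannot be used to discharge it for you.
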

\begin{proof}
  Induction in the structure of $Q$.
  \begin{description}
  \item[$Q = \Nil$:] Trivial.
  \item[$Q = X$:] Immediate, since $Q [\mu X.P /X] = \mu X.P$.
  \item[$Q = Y$ (with $Y \neq X$):] Immediate.
  \item[$Q = Q_1 \para Q_2$:] We must then have concluded $\Tenv, \Renv
    \vdash Q$ using \runa{Par} with premises $\Tenv_1, \Renv \vdash Q_1$ and
    $\Tenv_2, \Renv \vdash Q_2$. By induction hypothesis we then have
\begin{align*}
\Tenv_1, \Renv \vdash Q_1 [\mu X.P / X] \\
\Tenv_2, \Renv \vdash Q_2 [\mu X.P / X] 
\end{align*}
We now use the \runa{Par} rule and get
\[ \Tenv, \Renv \vdash Q_1[\mu X.P /X] \para Q_2[\mu X.P/X] \]
The result now follows by the distributive property of substitution.
\item[$Q = \new{x:T}P_1$:] We must have conclude $\Tenv, \Renv \vdash Q$
  using \runa{Session} with premise $\Tenv, x : S , \Renv \vdash P_1$. By
  induction hypothesis we have that $\Tenv, x : S , \Renv \vdash P_1[\mu
  X.P/X]$. But then by the \runa{Session} rule we get that $\Tenv, \Renv
  \vdash \new{x:T}P_1[\mu X.P/X]$, and we conclude that $\Tenv , \Renv
  \vdash
  Q[\mu X.P/X]$.
\item[$Q = \mu Y. Q_1$:] We must have concluded $\Tenv, \Renv
    \vdash Q$ using \runa{Rec} with premise $\Tenv, \Renv \vdash
    Q_1$. By induction hypothesis we have
\[ \Tenv, \Renv \vdash Q_1 [\mu X.P / X] \]
We can now apply \runa{Rec} to get the desired result.
\item[$Q = \inp{a}{x}Q_1$:] We must have concluded $\Tenv, \Renv
    \vdash Q$ using \runa{In} with premise $\Tenv_1, a : T_2, x : T_1, \Renv
    \vdash Q_1$ and assuming that $\Tenv = \Tenv_1, a : ?T_1.T_2$. By
    applying the induction hypothesis, we get that 
\[ \Tenv_1, a : T_2, x : T_1, \Renv \vdash Q_1 [\mu X.P / X] \]
An application of \runa{In} and the properties of substitution now
gives us the result.
\item[$Q = \outp{a}{x}Q_1$:] Similar to the previous case.
  \end{description}
\end{proof}

We also need a substitution lemma for names.

\begin{lemma}[Substitution of names] \label{lemma:subst-name}
If $\Tenv, x : T, \Renv \vdash P$ and $y \notin n(P)$ then $\Tenv, y
: T, \Renv
\vdash P \subst{y}{x}$.  
\end{lemma}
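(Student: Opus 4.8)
The plan is to prove the lemma by structural induction on $P$, inverting the last typing rule used to derive $\Tenv, x : T, \Renv \vdash P$ and then re-applying the same rule after the renaming. Because the rules in \tabref{tab:depthbounded} are essentially syntax-directed, the form of $P$ fixes the rule (or, for inputs and channel restrictions, a small set of candidate rules) that must have been applied, so the induction amounts to pushing the renaming through the premises and reconstructing the conclusion with the binding $x : T$ replaced by $y : T$ and $P$ replaced by $P\subst{y}{x}$. Throughout, the hypothesis $y \notin \n{P}$ together with the standing convention that all free and bound names are distinct guarantees that $y$ never clashes with a bound name of $P$, so substitution commutes with the process constructors and no capture can occur.

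The base and structural cases are routine. For $P = \Nil$ (rule \runa{Nil}) the environment $\Tenv, x : T$ is unlimited; renaming a name leaves types unchanged and hence preserves unlimitedness, so $\Tenv, y : T$ is unlimited and $\Nil\subst{y}{x} = \Nil$ is typed. For the prefix rules \runa{In-1}, \runa{In-2}, \runa{Out-1}, \runa{Out-2} I would note that the \emph{subject} of the prefix may itself be $x$, in which case the renaming turns the subject into $y$; I apply the induction hypothesis to the continuation (whose environment now carries the successor type of $x$, say $\pol{x}{p} : T_2$, together with the object name, which by distinctness differs from both $x$ and $y$) and then re-apply the prefix rule with subject $\pol{y}{p}$. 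For \runa{Par} the domain-disjointness required by the addition $\Tenv_1 + \Tenv_2$ places the binding of $x$ in exactly one summand; I rename in that summand via the induction hypothesis, observe that the other component is untouched by the substitution since $x$ is not among its free names, and use $y \notin \n{P}$ to keep the two domains disjoint after renaming. The restriction rules \runa{Session} and \runa{Chan} bind fresh polarized names distinct from $x$ and $y$, so the induction hypothesis applies directly to the body.

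The delicate cases, and where I expect the real work to lie, are the two rules that interact with the recursion environment, \runa{Rec} and \runa{Var}. In \runa{Rec} the premise types the body under $\Renv$ extended by $X : (\Tenv, x : T)$ — that is, the current type environment, which mentions $x$, is recorded inside $\Renv$ — so to reconstruct $\mu X.(P\subst{y}{x})$ under $X : (\Tenv, y : T)$ the renaming must be propagated into this new recursion-environment entry as well. Correspondingly, in \runa{Var} the side condition $\dom{\Tenv, x : T} \subseteq \dom{\Renv(X)}$ can only become $\dom{\Tenv, y : T} \subseteq \dom{\Renv(X)}$ if the binding $\Renv(X)$ has likewise had $x$ renamed to $y$. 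The clean way to make the induction compose is therefore to carry the substitution through the recursion environment in lockstep, establishing the judgement with $\Renv$ replaced by its image under $\subst{y}{x}$, and then to check that this renaming preserves the \runa{Var} side conditions: since $\subst{y}{x}$ is a bijection on names that leaves types fixed, the domain-inclusion transports along the renaming and the ``$\Tenvlin$ terminal'' condition is untouched. Freshness of $y$ ensures this parallel renaming stays consistent across nested recursion binders, so the reconstructed derivation satisfies every side condition. This bookkeeping between name renaming and the recursion environment is the main obstacle; once it is set up correctly, the remaining cases are immediate.
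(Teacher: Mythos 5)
Your proposal is correct and follows the same route as the paper, whose entire proof is the single line ``Induction in the type rules.'' Your elaboration---in particular, strengthening the induction so that the substitution $\subst{y}{x}$ is carried through the recursion environment in lockstep, which is genuinely needed for the \runa{Rec}/\runa{Var} cases since \runa{Rec} records the current type environment (containing $x$) as the binding of $X$---is exactly the bookkeeping the paper's terse proof leaves implicit.
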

\begin{proof}
  Induction in the type rules.
\end{proof}

\subsection{A fidelity theorem}

For a binary session type system, subject reduction takes the form of
\emph{fidelity}: the communications in a well-typed process proceed
according to the protocol specified by the channels involved.

\begin{lemma}[Subject congruence and normalization] \label{lemma:sc}
Suppose $\Tenv, \Renv \vdash P$. Then
\begin{itemize}
\item If $P \scong Q$, then also $\Tenv, \Renv\vdash Q$
\item If $P \succ Q$, then also $\Tenv, \Renv \vdash Q$
\end{itemize}
\end{lemma}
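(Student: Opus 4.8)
The plan is to prove both parts by induction on the derivation of the relation, inverting the type rules and then reassembling a derivation for the other side. For the first item, recall that $\scong$ is the least congruence closed under the axioms of \tabref{tab:scong}, so a derivation of $P \scong Q$ is built from (i) instances of the six axioms, (ii) reflexivity, (iii) transitivity, and (iv) congruence closure (an axiom applied inside one of the process constructors). Reflexivity is immediate and transitivity composes two appeals to the induction hypothesis. For congruence closure I invert the outermost type rule of the larger term, apply the induction hypothesis to the premise that contains the rewritten subterm, and re-apply the same rule; since each constructor admits only a small number of rules (restriction, input, and output each have two, distinguished by whether the subject is linear or an unlimited channel) this step is mechanical. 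It therefore suffices to treat the six axioms, and since $\scong$ is symmetric I will prove each axiom preserves typability in both directions, which also discharges symmetry.

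The engine for the axioms is inversion together with the algebra of environment addition. \runa{Par-1} and \runa{Par-2} follow because $+$ is commutative and associative on type environments (their domains are disjoint) and on recursion environments (they agree on their overlap, by definition of $\Renv_1 + \Renv_2$). For \runa{New-1} I invert \runa{Session}/\runa{Chan} twice: the bindings introduced for $x$ and for $y$ lie in disjoint parts of the environment — using the standing assumption that bound names are distinct — so their order can be exchanged and the two restrictions re-introduced in the opposite order. For \runa{New-2}, the scope-extrusion axiom, I invert \runa{Par} to obtain $\Tenv_1, \Renv_1 \vdash \new{x:T}P$ and $\Tenv_2, \Renv_2 \vdash Q$; the side condition $x \notin \fn{Q}$ gives $x^{+}, x^{-} \notin \dom{\Tenv_2}$, so the endpoint bindings for $x$ can be moved outside the \runa{Par} split and the restriction re-applied around $P \para Q$.

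The two genuinely delicate points, which I expect to be the main obstacle, are the unlimited-name bookkeeping and the \runa{Chan} constraint. First, \runa{Nil-1} and \runa{Nil-2} change the environment by an unlimited part: typing $P \para \Nil$ decomposes $\Tenv$ as $\Tenv_1 + \Tenv_2$ with $\Tenv_2$ unlimited (and $\Renv_2$ arbitrary), so to recover $\Tenv, \Renv \vdash P$ I need a \emph{weakening/strengthening} property stating that bindings of the form $x : \chan{T}$ or $x : \sessend$, together with any recursion-variable bindings not consulted by $P$, may be added to or removed from a judgment without affecting typability. I would establish this by a separate induction on the type rules; linear bindings are left untouched, so linearity is preserved. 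Second, \runa{Chan} forces $\emptyRenv$ both above and below the restriction, and this must be checked to be compatible with the recombinations above — in particular that an extruded or swapped $\chan{T}$-restriction still sees $\emptyRenv$ — which is where the interaction between restriction and recursion has to be handled carefully.

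For the second item, $\succ$ is generated by the single axiom $\new{x}P \succ P$ with $x \notin \fn{P}$, closed under $\scong$; the latter closure is discharged by the first item, so only the generating step remains. Inverting \runa{Session} or \runa{Chan} on $\new{x:T}P$ and using $x \notin \fn{P}$, the key observation is that an \emph{unused} linear endpoint cannot carry a non-$\sessend$ type in a well-typed term — a binding $x^{+} : S$ with $S \neq \sessend$ would have to be consumed inside $P$ — so the freed endpoint types are forced to be $\sessend$ (equivalently, $T$ is terminal), and the strengthening direction of the weakening property then removes the now-unused bindings to yield $\Tenv, \Renv \vdash P$. The crux throughout is thus the weakening lemma combined with this linearity argument that unused linear names must be terminal.
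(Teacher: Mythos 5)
Your overall strategy---induction on the derivations of $\scong$ and $\succ$, inverting type rules and reassembling---is exactly the approach the paper takes; its entire proof is this induction stated in one line, with none of the cases worked out. Your treatment of the second item ($\succ$) is sound: an unused endpoint binding is never the subject or object of a prefix rule, so it persists down to a \runa{Nil} or \runa{Var} leaf, where unlimitedness (resp.\ terminality of $\Tenvlin$) forces its type to be $\sessend$, and removing such bindings (strengthening) is unproblematic at every rule. The genuine gap is in the other half of your auxiliary lemma: \emph{weakening} by fresh unlimited bindings is false in this system, and with it your \runa{Nil-1} case collapses. The culprit is the side condition $\dom{\Tenv} \subseteq \dom{\Tenv_1}$ of \runa{Var}, which your proposal never confronts. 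Concretely, let $\Renv(X) = \Tenv_1$ with $\Tenv_1 = a^{+} : \sessend$. Then $a^{+} : \sessend, \Renv \vdash X$ holds by \runa{Var}, and $b : \chan{T}, \emptyRenv \vdash \Nil$ holds by \runa{Nil}, so \runa{Par} yields $a^{+} : \sessend, b : \chan{T}, \Renv \vdash X \para \Nil$; but $a^{+} : \sessend, b : \chan{T}, \Renv \vdash X$ is \emph{not} derivable, since $\set{a^{+},b} \not\subseteq \dom{\Renv(X)} = \set{a^{+}}$ and \runa{Var} is the only rule that types $X$. Inversion of \runa{Par} on $P \para \Nil$ genuinely forces such a merge, so no ``separate induction on the type rules'' can establish the weakening property you posit: it is simply false.

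The second point you flag but defer---the interaction of \runa{Chan} with the rearrangement axioms---is likewise a real failure, not a detail to be handled later: $\new{x:\chan{T}}\Nil \para X$ is typable with a non-empty recursion environment (the \runa{Chan} subderivation uses $\emptyRenv$, and \runa{Par} then adds the $\Renv$ coming from the $X$ component), yet $\new{x:\chan{T}}(\Nil \para X)$ is untypable because \runa{Chan} demands $\emptyRenv$ in premise and conclusion; so \runa{New-2} fails. Similarly, the symmetric instance of \runa{Nil-2} fails: from $\emptyset, \Renv \vdash \Nil$ the lemma would require $\emptyset, \Renv \vdash \new{x:(S,\ol{S})}\Nil$ for an \emph{arbitrary} annotation $S$, impossible when $S \neq \sessend$ since \runa{Session}'s premise then has a non-unlimited environment over $\Nil$. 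To be fair, these counterexamples indict the paper's lemma as much as your proof---the paper's one-line proof silently skips all of them---but a complete argument must either restrict the statement (e.g.\ to the judgments and congruence instances actually needed in the fidelity proof) or repair the rules (e.g.\ have \runa{Var} constrain only $\dom{\Tenvlin}$, and let \runa{Chan} carry an arbitrary $\Renv$); your proposal does neither, so as written the first item cannot be pushed through.
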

\begin{proof}
Induction in the rules defining $\scong$ and $\succ$.
\end{proof}

The fidelity theorem is a type preservation result: It states that the endpoint
types evolve according to the reduction performed. If the name $x$
giving rise to the reduction is free, the annotation of $x$ in the
type environment changes. If $x$ is bound, its annotation in the
restriction $\new{x:T}$ changes to $\new{x:T'}$, where
$T' = T \downarrow$.

\begin{theorem}[Fidelity]\label{thm:fidelity}
Let $\Tenv$ be a balanced type environment and let $P$ be recursion-closed. If $\Tenv, \emptyRenv \vdash P$ and $P
\trans{\alpha} P'$ where $x = \n{\alpha}$ then 
\begin{itemize}
\item if $x \in \fn{P}$ and $\Tenv =
  \Tenv'', x : T$, then $\Tenv', \emptyRenv \vdash P'$ where $\Tenv'$ is balanced
  and $\Tenv' = \Tenv'', x : T\downarrow$
\item if $x\notin \fn{P}$, then $\Tenv, \emptyRenv \vdash P'$ and if $\Tenv_P =
  \Tenv'', x : T$ then $\Tenv_{P'} = \Tenv'', x : T \downarrow$ and
  $\Tenv_{P'}$ is balanced.
\end{itemize}
\end{theorem}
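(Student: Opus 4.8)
The plan is to prove the statement by induction on the derivation of the reduction $P \trans{\alpha} P'$, with a case analysis on the last rule of \tabref{tab:red} used. The two bullets of the conclusion correspond to the two ways the reducing channel $x = \n{\alpha}$ can be supplied: the free case is generated by the base rule \runa{Com-Annot}, while the bound case is generated by \runa{New-Annot} applied to a restriction on $x$ itself. Throughout I exploit that $\Tenv$ is balanced to guarantee that the two endpoints of the communicating channel carry dual types, which is the heart of fidelity.

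For the base case \runa{Com-Annot} we have $P = \inp{\pol{a}{p}}{z}P_1 \para \outp{\pol{a}{\ol p}}{\pol{y}{q}}P_2$ with $a$ free and $\alpha = \set{a}$. Inverting the typing, $\Tenv, \emptyRenv \vdash P$ must end in \runa{Par}, splitting $\Tenv = \Tenv_1 + \Tenv_2$, with the input side typed by \runa{In-1}, so $\Tenv_1 = \Tenv_1', \pol{a}{p} : \inpt{T_1}{T_2}$ and $\Tenv_1', \pol{a}{p} : T_2, z : T_1, \emptyRenv \vdash P_1$, and the output side typed by \runa{Out-1}, so $\Tenv_2 = \Tenv_2', \pol{a}{\ol p} : \outpt{T_1'}{T_2'}, \pol{y}{q} : T_1'$ and $\Tenv_2', \pol{a}{\ol p} : T_2', \emptyRenv \vdash P_2$. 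Balancedness of $\Tenv$ forces $\inpt{T_1}{T_2} = \ol{\outpt{T_1'}{T_2'}}$, hence $T_1 = T_1'$ and $T_2 = \ol{T_2'}$; in particular the object type expected by the input equals the type of the delegated name $\pol{y}{q}$. I would then apply the name substitution lemma (Lemma \ref{lemma:subst-name}) — whose side condition $\pol{y}{q} \notin n(P_1)$ holds by the convention that all names are distinct — to get $\Tenv_1', \pol{a}{p} : T_2, \pol{y}{q} : T_1, \emptyRenv \vdash P_1\subst{\pol{y}{q}}{z}$, and recombine with the $P_2$-typing by \runa{Par}. The resulting $\Tenv'$ differs from $\Tenv$ only in that $\pol{a}{p}$ and $\pol{a}{\ol p}$ are decremented to $T_2$ and $T_2'$; these remain dual since $T_2 = \ol{T_2'}$, so $\Tenv'$ is balanced, and viewing the two endpoints of $a$ as the balanced pair they determine, the change is exactly $T \mapsto T\downarrow$, as required.

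For the inductive rules I proceed as follows. Under \runa{Par-Annot} the reduction is confined to one component $P_a$ of $P = P_a \para P_b$; since \runa{Com-Annot} co-locates both endpoints of the reducing channel, both occur free in $P_a$ and hence in the sub-environment $\Tenv_a$, which is therefore balanced, so the induction hypothesis applies and returns a balanced $\Tenv_a'$ with $x$'s endpoints decremented dually, and recombining with the untouched $\Tenv_b$ preserves balancedness. Under \runa{New-Annot} on a restriction $\new{x':T}$ I distinguish whether $x'$ is the reducing channel. If the typing used \runa{Session}, I split $T = (S,\ol S)$ into $x'^+ : S, x'^- : \ol S$, apply the induction hypothesis to the body (where $x'^{\pm}$ are now free), and re-apply \runa{Session}; when $x'$ is the reducing channel this yields the bound-name bullet, the annotation evolving to $T\downarrow$ and the internal type environment (Definition \ref{def:internal}) updating $x' : T$ to $x' : T\downarrow$, while all other cases simply propagate the hypothesis. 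The \runa{Chan} subcase is immediate since $\chan{T}\!\downarrow = \chan{T}$. The \runa{Unfold-Annot} case is reduced to the body reduction $Q \trans{\alpha'} P'$ by first showing that unfolding preserves typing — using the substitution lemma for variables in typings together with \runa{Rec} — and the substitution lemma for variables in reductions to relate the reducts; recursion-closedness and the fact that the top-level recursion environment is $\emptyRenv$ keep the bookkeeping finite. Finally \runa{Struct-Annot} follows directly from subject congruence and normalization (Lemma \ref{lemma:sc}).

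The step I expect to be the main obstacle is managing balancedness across the \runa{Par} splitting in combination with delegation. The delicate point is to argue that the channel responsible for a reduction inside a parallel component has \emph{both} of its endpoints in that component's environment, so that fidelity can decrement them in a dual fashion and global balancedness is recovered upon recombination; sessions whose two endpoints are split between the two components must be shown to be untouched by the step. A secondary difficulty is the \runa{Unfold-Annot} case, where I must ensure that the recursion environment introduced by \runa{Rec} when typing $\mu X.P_0$ is correctly discharged by substituting $\mu X.P_0$ for $X$, so that the unfolded process is typable under the same $\Tenv$ and $\emptyRenv$ before the induction hypothesis is invoked.
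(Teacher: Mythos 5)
Your proposal is correct and follows essentially the same route as the paper's own proof: induction on the annotated reduction rules, with the \runa{Com-Annot} case handled by inverting \runa{Par}, \runa{In-1} and \runa{Out-1}, using balancedness to obtain duality of the two endpoint types, applying the name substitution lemma (Lemma \ref{lemma:subst-name}), and recombining with \runa{Par}, while the remaining rules go through the induction hypothesis together with \runa{Session}, the variable substitution lemmas, and subject congruence (Lemma \ref{lemma:sc}). If anything, you spell out two points the paper leaves implicit — why the sub-environment in the \runa{Par-Annot} case is itself balanced (both endpoints of the reducing channel are co-located in that component), and the explicit discharge of the recursion environment in \runa{Unfold-Annot} via the substitution lemmas — which strengthens rather than departs from the paper's argument.
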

\begin{proof}
Induction in the reduction rules.
\begin{description}
\item[Com-Annot] Here, only the first case is relevant. We know that $P = \inp{\pol{a}{p}}{x}P_1 \para
  \outp{\pol{a}{\ol{p}}}{\pol{y}{q}}P_2$. Since $\Tenv, \emptyRenv  \vdash P$, we must have that $\Tenv =
  \Tenv_1 + \Tenv_2$ where 
\begin{equation} \Tenv_1, \emptyRenv  \vdash
  \inp{\pol{a}{p}}{x}P_1 \label{eq:dom1} \end{equation}
 and 
\begin{equation}\Tenv_2, \emptyRenv  \vdash \outp{\pol{a}{\ol{p}}}{\pol{y}{q}}P_2. \label{eq:dom2} \end{equation}
We must have used \runa{In} to conclude \eqref{eq:dom1}, so we have
$\Tenv_1(\pol{a}{p}) = ?T_1.S$ and, letting $\Tenv_1 =
\Tenv'_1 +  \pol{a}{p} : ?T_1.S$, we have
\begin{equation}\Tenv'_1, \pol{a}{p} : S, x : T_1, \emptyRenv  \vdash
  P_1. \label{eq:dom3} \end{equation} 
Similarly, we must have used \runa{Out} to conclude
\eqref{eq:dom2}. Since $\Tenv$ is balanced, we have
$\Tenv_2(\pol{a}{\ol{p}}) = !T_1.\ol{S}$. By the substitution lemma Lemma
\ref{lemma:subst-name} and \eqref{eq:dom3}, we
have $\Tenv'_1, \pol{a}{p} : S, \pol{y}{q} : T_1, \emptyRenv  \vdash P_1\subst{y}{x}$. Similarly, letting
$\Tenv_2 = \Tenv'_2, \pol{a}{\ol{p}} : !T_1.\ol{S}, \pol{y}{q} : T_1$, we get
$\Tenv'_2, \pol{a}{\ol{p}} : \ol{S} , \emptyRenv \vdash P_2$. An application of \runa{Par} now
gives us that
\[ \Tenv'_1 + \Tenv'_2 + \pol{a}{p}: S, \pol{a}{\ol{p}} : \ol{S}, y :
  T_1, \emptyRenv  \vdash P_1 \subst{y}{x} \para
P_2 \]
The type environment $\Tenv'_1 + \Tenv'_2 + \pol{a}{p}: S, \pol{a}{\ol{p}} : \ol{S}, y :
  T_1$ is balanced, since $\Tenv'_1$ and $\Tenv'_2$ are balanced and
  since $y$ must appear with polarity $\ol{q}$ in one of these
  (because $\Tenv$ is balanced).
\item[Par-Annot] Since $\Tenv , \emptyRenv \vdash P \para Q$, we have that
  $\Tenv_1, \emptyRenv  \vdash P$ where $\Tenv = \Tenv_1 + \Tenv_2$. The result now
  follows easily by an application of the induction hypothesis to the
  reduction $P \trans{a} P'$ and subsequent use of the \runa{Par}
  rule.
\item[New-Annot] There are two cases here: whether $x = a$ or
  $x \neq a$. In both cases, the result follows immediately by the
  induction hypothesis and use of the \runa{Session} rule.
\item[Unfold-Annot] Follows from Lemma \ref{lemma:sc} and a direct 
  application of the induction hypothesis.
\item[Struct-Annot] Follows from Lemma \ref{lemma:sc} and a direct 
  application of the induction hypothesis.
\end{description}
\end{proof}

\section{Soundness of the type system for depth-boundedness} \label{sec:soundness}

In the following we will consider the correctness properties of the
type system for depth boundedness.

\subsection{Properties of unfolding and nesting}

We first establish a collection of properties that hold for arbitrary
processes. Next we show that there are further properties guaranteed
by well-typed processes.

The following lemma describes how reductions occur. Reductions can
happen directly or may need unfoldings.

\begin{lemma}
  Let $P$ be an arbitrary recursion-closed process. 
\begin{enumerate}
\item \label{lemma:caseone} If $P \trans{\set{x}}
  P'$, then there exists an unfolding context $C$ and a process
  $Q$ such that $P \scong C[Q]$ and $P' \scong C[Q']$, and $Q
  \trans{\set{x}} Q'$ is an instance of \runa{Com-Annot}.
\item \label{lemma:casetwo} If $P \trans{\set{\Rec,x}} P'$ then there exists an unfolding context
  $C$ and either $P \scong C[\mu X.Q_1]$ for some $Q_1$ where 
        $Q_1[\mu X.Q_1 / X] \trans{\set{x}} Q'_1$ and $P' \scong C[Q'_1]$
  or $P \scong C[(\mu X. Q_1) \para Q_2] \text{ where } Q_1[\mu X.Q_1
        / X] \para Q_2 \trans{\set{\Rec,x}} Q'_1 \para Q'_2$  is an instance
        of \runa{Com-Annot} and $P' \scong C[Q'_1 \para Q'_2]$.
\end{enumerate}
\end{lemma}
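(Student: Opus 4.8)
The plan is to argue by induction on the derivation of the reduction $P \trans{\alpha} P'$, splitting according to whether the annotation carries $\Rec$; this split matches the two parts of the statement. The only rule that introduces $\Rec$ into an annotation is \runa{Unfold-Annot}, so a reduction $P \trans{\set{x}} P'$ whose annotation omits $\Rec$ has a derivation built solely from \runa{Com-Annot}, \runa{Par-Annot}, \runa{New-Annot} and \runa{Struct-Annot}; this is the content of part (1), while part (2) reduces, in its essential case, to analysing a single application of \runa{Unfold-Annot}. Recursion-closedness guarantees that the substitutions $P_1[\mu X.P_1/X]$ appearing below are again closed processes.

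For part (1) the base case is \runa{Com-Annot}: take $C = [\;]$ and $Q = P$, so that the given instance is exactly $Q \trans{\set{x}} Q'$. For \runa{Par-Annot}, where $P \para R \trans{\set{x}} P' \para R$ comes from $P \trans{\set{x}} P'$, the induction hypothesis gives $P \scong C'[Q]$ and $P' \scong C'[Q']$. Every unfolding context has the shape $\new{z_1}\cdots\new{z_n}([\;] \para P_0)$, so $C'[Q] \para R \scong \new{z_1}\cdots\new{z_n}(Q \para (P_0 \para R))$ using \runa{New-2} (legitimate because all free and bound names are distinct, hence each $z_i \notin \fn{R}$) together with \runa{Par-2}; this exhibits the new context $C = \new{z_1}\cdots\new{z_n}([\;]\para (P_0\para R))$. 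For \runa{New-Annot} we simply prefix the context from the hypothesis with the binding restriction, and \runa{Struct-Annot} is immediate from the transitivity and symmetry of $\scong$.

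For part (2) the interesting case is \runa{Unfold-Annot}, where $P > Q$ and $Q \trans{\set{x}} P'$ yield $P \trans{\set{\Rec,x}} P'$. The premise $P > Q$ means $P = D[\mu X.P_1]$ and $Q = D[P_1[\mu X.P_1/X]]$ for an unfolding context $D$, and applying part (1) to $Q \trans{\set{x}} P'$ produces a context $C'$ and a \runa{Com-Annot} redex $R = \inp{\pol{a}{p}}{y}R_1 \para \outp{\pol{a}{\ol{p}}}{\pol{b}{q}}R_2$ with $a = x$ sitting at its hole. The core of the argument is to reconcile the unfolding site (the hole of $D$) with the communication site (the hole of $C'$): since recursion is guarded, every occurrence of $X$ in $P_1$ lies under a prefix, so the top-level prefixes exposed by $P_1[\mu X.P_1/X]$ all come from $P_1$ itself. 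A case analysis then shows that either both communicating prefixes originate in the unfolded body — giving the first alternative with $Q_1 = P_1$, $Q_1[\mu X.Q_1/X] \trans{\set{x}} Q'_1$ and $C = D$ — or exactly one does while its partner is a parallel sibling supplied by the $[\;]\para(\cdot)$ slot of the unfolding context, giving the second alternative $P \scong C[(\mu X.Q_1)\para Q_2]$. The subordinate rules \runa{Par-Annot}, \runa{New-Annot} and \runa{Struct-Annot} are dispatched exactly as in part (1).

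I expect the main obstacle to be this final case analysis: establishing that the two listed alternatives are exhaustive, i.e.\ that the communication cannot proceed entirely within the un-unfolded part of $D$ (which would make the $\Rec$ annotation spurious), and that when the redex straddles the unfolding, the surrounding restrictions and parallel components can always be rearranged, via \runa{New-1}, \runa{New-2}, \runa{Par-1} and \runa{Par-2}, into the required unfolding-context form. A minor caveat worth recording is that $C$ is the same only up to the type annotation on the binder of $x$: if $x$ is bound inside $C$, that annotation evolves from $T$ to $T \downarrow$ under \runa{New-Annot}, which is immaterial for the nesting-based arguments that consume this lemma.
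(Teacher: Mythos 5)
Your proposal follows essentially the same route as the paper: the paper's entire proof reads ``By induction in the annotated reduction rules. The proof of Case \ref{lemma:casetwo} uses Case \ref{lemma:caseone}'', which is precisely your structure --- induction on the derivation of the annotated reduction, with the \runa{Unfold-Annot} case of part (2) resolved by invoking part (1) --- so you have simply filled in the case analysis the paper leaves implicit. The caveats you flag (the possibility of a spurious unfolding where the redex lies entirely outside the unfolded body, and the drift of the type annotation $T \rightsquigarrow T\downarrow$ under \runa{New-Annot}, which means the contexts for $P$ and $P'$ agree only up to that annotation) are genuine imprecisions in the paper's own statement and semantics that its one-line proof does not address, so your account is at least as rigorous as the original.
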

\begin{proof}
By induction in the annotated reduction rules. The proof of Case
\ref{lemma:casetwo} uses Case \ref{lemma:caseone}.
\end{proof}

\subsection{Nesting properties of well-typed processes}

We now restrict our attention to well-typed processes. The only
potential source of unbounded restriction depth is the presence of
recursion, and we now show how our type system controls the
introduction of new bound names in the presence of recursion.

The first lemma tells us that bound names introduced by an
unfolding do not interfere with names in its surrounding process that
represent terminated channels.

\begin{lemma}
If $\Tenv, \Renv \vdash \new{c: (\sessend,\sessend)}P$ then $c \not \in \fn{P}$.
\end{lemma}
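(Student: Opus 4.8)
The plan is to argue by inversion followed by a strengthened induction. The annotation $(\sessend,\sessend)$ is a session pair, so the only rule that can derive $\Tenv,\Renv\vdash\new{c:(\sessend,\sessend)}P$ is \runa{Session} (not \runa{Chan}, which concludes restrictions of channel type $\chan{T}$). Taking $S=\sessend$ and using $\overline{\sessend}=\sessend$, inversion gives the premise $\Tenv,\pol{c}{+}:\sessend,\pol{c}{-}:\sessend,\Renv\vdash P$. It then suffices to prove the following by induction on the typing derivation: \emph{if $\Tenv,\Renv\vdash Q$ and $\Tenv$ assigns $\sessend$ to a polarized name $\pol{n}{p}$, then $\pol{n}{p}\notin\fn{Q}$.} Applying this to $P$ for both $\pol{c}{+}$ and $\pol{c}{-}$ shows that neither endpoint of $c$ is free in $P$, so $c\notin\fn{P}$.

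The structural cases are routine. For $Q=\Nil$ and $Q=X$ there are no free names. For $Q=Q_1\para Q_2$ the rule is \runa{Par} with $\Tenv=\Tenv_1+\Tenv_2$; since $+$ requires disjoint domains, $\pol{n}{p}:\sessend$ lies in exactly one side, say $\Tenv_1$, where the induction hypothesis gives $\pol{n}{p}\notin\fn{Q_1}$. As $\pol{n}{p}\notin\dom{\Tenv_2}$, the routine invariant that the free polarized names of a well-typed process lie in the domain of its type environment yields $\pol{n}{p}\notin\fn{Q_2}$, hence $\pol{n}{p}\notin\fn{Q}$. The cases \runa{Rec}, \runa{Session}, and \runa{Chan} for inner binders are handled similarly: each premise still carries $\pol{n}{p}:\sessend$, the induction hypothesis applies to the body, and the binder neither equals nor captures $n$ (bound names being pairwise distinct), so non-freeness propagates outward.

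The heart of the argument is the prefix rules, where I must exclude $\pol{n}{p}$ both as a subject and as an object. It cannot be a subject: in \runa{In-1} and \runa{Out-1} the subject has a type of the form $\inpt{T_1}{T_2}$ respectively $\outpt{T_1}{T_2}$, and in \runa{In-2} and \runa{Out-2} the subject has type $\chan{T}$, none of which is $\sessend$. The input rules bind a fresh object name distinct from $n$, and \runa{Out-1} cannot take $\pol{n}{p}$ as its object because of its side condition $T_1\neq\sessend$. In each of these cases the premise still contains $\pol{n}{p}:\sessend$ (the subject carries a different type, and the object consumed by \runa{Out-1} has type $\neq\sessend$), so the induction hypothesis gives $\pol{n}{p}\notin\fn{Q_1}$, and since the subject and object are both different from $\pol{n}{p}$ we conclude $\pol{n}{p}\notin\fn{Q}$.

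The one delicate case, and the step I expect to be the main obstacle, is \runa{Out-2}, whose object $\pol{y}{q}$ is required only to be \emph{unlimited}; under our definition $\sessend$ is itself unlimited, so a purely syntactic type mismatch is not available. Here I must argue that a terminated endpoint of type $\sessend$ still cannot be the object of \runa{Out-2}: this rule delegates genuine shared channels, whose carried type is a channel type $\chan{T}$, whereas $\pol{n}{p}:\sessend$ is a linear session endpoint and $\sessend\neq\chan{T}$ for every $T$. Making this precise is the crux, because $\sessend$ sits in the overlap of the \emph{linear} and \emph{unlimited} classifications; once it is established that no output rule admits a $\sessend$-typed object --- \runa{Out-1} by its explicit side condition and \runa{Out-2} because its object is an unlimited channel --- the induction closes and the lemma follows.
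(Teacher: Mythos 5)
The paper states this lemma without any proof, so your attempt can only be judged on its own merits. Your overall strategy is the right one: inversion shows that only \runa{Session} can conclude the judgement (the type $(\sessend,\sessend)$ rules out \runa{Chan}), and the strengthened induction hypothesis --- a polarized name assigned $\sessend$ by the environment cannot occur free in a well-typed process --- is exactly the right shape. Your handling of the structural cases and of \runa{In-1}, \runa{In-2}, \runa{Out-1} is correct. But the case you yourself call ``the crux'', \runa{Out-2}, is not closed, and it cannot be closed from the paper's definitions as written. Your proposed resolution asserts that the object of \runa{Out-2} must carry a channel type $\chan{T}$, but nothing in the depth-boundedness system says this: the side condition is only ``$T_2$ unlimited'', and the paper's notion of unlimited (stated for environments, read pointwise) explicitly admits $\sessend$ alongside $\chan{T}$. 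You appear to be importing the grammar of the paper's \emph{second} type system, where $S_\un ::= \chan{S_\un}$ does force delegated unlimited names to have channel type; no such restriction exists in the first system, whose channel types $\chan{T}$ may carry arbitrary $T$.

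The gap is not merely presentational, because under the inclusive reading of ``unlimited'' the strengthened claim --- and the lemma itself --- is false. Take $\Tenv = x : \chan{\sessend}$ and $P = \outp{x}{\pol{c}{+}}\Nil$. The environment $x : \chan{\sessend}, \pol{c}{+} : \sessend, \pol{c}{-} : \sessend$ is unlimited, so \runa{Nil} applies; \runa{Out-2} with $T_2 = \sessend$ then types $\outp{x}{\pol{c}{+}}\Nil$; and \runa{Session} yields $\Tenv, \emptyRenv \vdash \new{c:(\sessend,\sessend)}P$ even though $c \in \fn{P}$. So completing your proof requires a decision about the type system rather than more induction: either read ``$T_2$ unlimited'' in \runa{Out-2} as ``$T_2 = \chan{T}$ for some $T$'', or add the side condition $T_2 \neq \sessend$ in analogy with \runa{In-1}, \runa{In-2} and \runa{Out-1} --- the latter is evidently the paper's intent, given its remark that endpoints that are no longer usable cannot be delegated. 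With that emendation your induction closes and the argument is complete; without it, the case you left open is a genuine counterexample, not a routine verification.
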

% \begin{proof}
%   Induction in the structure of $P$. For output prefixes, a free
%   occurrence of a name can be found in the object position of the
%   prefix; in this case the result is guaranteed by the condition that
%   the type of a term being output cannot be $\sessend$.
% \end{proof}

The following lemma tells us that names that appear in an unfolding context
will not reappear free in the result of unfolding a recursive process.

\begin{definition}[Known bound names]
  The set of known bound names in an unfolding context is defined by
  \begin{align*}
    \kn{[] \para P}  = \emptyset && 
    \kn{\new{x:T}C} = \set{x} \cup \kn{C}
  \end{align*}
\end{definition}

\begin{lemma}\label{lemma:newnames}
  Suppose we have $\Tenv, \Renv \vdash C[X]$ where $C[X]$ is
  recursion-closed and $X$ occurs in $\mu X.P$. Then we
  also have $\Tenv, \Renv \vdash C[\mu X.P]$ and
  $\kn{C} \cap \fn{\mu X.P} = \emptyset$.
\end{lemma}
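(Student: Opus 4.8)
The plan is to establish the two conclusions separately, in both cases by reading off the structure of the typing derivation of $C[X]$ and exploiting the bookkeeping done by the recursion environment. I would prove the typing claim $\Tenv, \Renv \vdash C[\mu X.P]$ by structural induction on the unfolding context $C$. When $C = \new{x:T}C'$, the last rule applied must be \runa{Session} or \runa{Chan}; I would invert it to obtain a typing of $C'[X]$ in the environment extended with the endpoints of $x$, apply the induction hypothesis to get a typing of $C'[\mu X.P]$, and then reapply the same restriction rule. When $C = [\;] \para R$, the derivation ends with \runa{Par}, splitting the judgement into $\Tenv_X, \Renv_X \vdash X$ and $\Tenv_R, \Renv_R \vdash R$; the occurrence of $X$ can only have been typed by \runa{Var}, so $\Renv_X(X) = \Tenv_1$ for the type environment $\Tenv_1$ that \runa{Rec} recorded when typing $\mu X.P$. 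The task at the hole is then to turn $\Renv_X(X) = \Tenv_1$ back into a typing $\Tenv_X, \Renv_X \vdash \mu X.P$, using the side conditions of \runa{Var} (the inclusion $\dom{\Tenv_X} \subseteq \dom{\Tenv_1}$ together with the requirement that the linear part of $\Tenv_X$ be terminal) to reconcile the environment present at the occurrence with the one used for $\mu X.P$. A final application of \runa{Par} then yields $\Tenv, \Renv \vdash C[\mu X.P]$.

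For the disjointness $\kn{C} \cap \fn{\mu X.P} = \emptyset$, the key observation is that the two name sets sit on opposite sides of the \runa{Rec} application that introduced $X$ into the recursion environment. The names in $\kn{C}$ are exactly the subjects of the restrictions occurring syntactically in $C$; reading the derivation from the root $\Tenv, \Renv \vdash C[X]$ down to the hole, each is introduced by a descending application of \runa{Session} or \runa{Chan}, so none belongs to $\dom{\Tenv}$, and by the global convention that all bound names are distinct none occurs in the fixed environment $\Tenv_1 = \Renv(X)$, which is already present at the root. This gives $\kn{C} \cap \dom{\Tenv_1} = \emptyset$. On the other side, since $\Tenv_1$ is precisely the environment with which \runa{Rec} typed $\mu X.P$, a routine free-names-in-domain property of the type system yields $\fn{\mu X.P} \subseteq \dom{\Tenv_1}$ (every free name of a well-typed process is assigned a type in its environment, as forced by \runa{In-1}, \runa{In-2} and \runa{Out-1}). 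Combining the two inclusions gives $\kn{C} \cap \fn{\mu X.P} \subseteq \kn{C} \cap \dom{\Tenv_1} = \emptyset$, which is the claim.

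The step I expect to be the main obstacle is the hole case of the typing claim, that is, re-deriving $\Tenv_X, \Renv_X \vdash \mu X.P$ from $\Renv_X(X) = \Tenv_1$. Here one must match the environment $\Tenv_X$ actually present at the occurrence of $X$ against the environment $\Tenv_1$ that typed $\mu X.P$, and the only leverage is the side conditions of \runa{Var}. The delicate point is that these conditions speak about the linear names (requiring the linear part of $\Tenv_X$ to be terminal and tying the linear names present at $X$ to those recorded for $\mu X.P$), so the reconciliation has to argue that the used-up linear endpoints contribute nothing to the retyping while the names genuinely needed by $\mu X.P$ remain available. I would isolate this as an auxiliary coincidence lemma about terminal linear environments stating exactly when the environment recorded in $\Renv$ for $X$ may be substituted for the one present at an occurrence of $X$. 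Once that lemma is in place, the induction for the typing claim closes, and the domain argument for the disjointness goes through as a bookkeeping consequence of the fresh-name convention.
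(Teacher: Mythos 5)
You should know at the outset that the paper states Lemma~\ref{lemma:newnames} with no proof at all, so your proposal must stand on its own merits; and it does not, because the step you yourself single out as the main obstacle is a genuine gap, and moreover it cannot be repaired in the form you propose. At the hole you have $\Tenv_X, \Renv_X \vdash X$ obtained by \runa{Var}, whose side conditions give you exactly three facts: $\Renv_X(X) = \Tenv_1$, $\dom{\Tenv_X} \subseteq \dom{\Tenv_1}$, and that the linear part of $\Tenv_X$ is terminal. To conclude $\Tenv_X, \Renv_X \vdash \mu X.P$ you must apply \runa{Rec}, whose premise is $\Tenv_X, \Renv_X, X : \Tenv_X \vdash P$. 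This fails for two independent reasons. First, $\Renv_X$ already binds $X$ to $\Tenv_1$, and the paper's addition of recursion environments tolerates a shared variable only when both sides assign it the same environment, so the rebinding $X : \Tenv_X$ is legitimate only if $\Tenv_X = \Tenv_1$. Second, nothing forces $\Tenv_X = \Tenv_1$; the entire design of \runa{Var} is that the linear names at the occurrence of $X$ are \emph{used up} (terminal), whereas $\Tenv_1$ assigns the non-terminal session types that $P$ needs in order to perform its communications. Concretely, if $\Tenv_1(\pol{a}{+}) = \outpt{T}{\sessend}$ and $P$ begins with an output on $\pol{a}{+}$, then at the occurrence of $X$ one has $\Tenv_X(\pol{a}{+}) = \sessend$, and the judgement $\Tenv_X, \Renv_X, X : \Tenv_X \vdash P$ is underivable, since \runa{Out-1} demands a subject type of the form $\outpt{T_1}{T_2}$. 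So the \runa{Var} side conditions are precisely what makes your ``coincidence lemma'' false rather than what would prove it: a terminal environment can never retype the body of the recursion. The only viable route for the typing half is not to re-derive a typing of $\mu X.P$ under $\Tenv_X$ at all, but to reuse the derivation that typed $\mu X.P$ under $\Tenv_1$ (the environment recorded in $\Renv$) and then invoke the paper's substitution lemma for typings of recursion; this requires adding the implicit but indispensable hypothesis that $\Renv$ records environments under which the corresponding recursive processes are actually typable --- the stated hypotheses (``$X$ occurs in $\mu X.P$'', $C[X]$ recursion-closed) provide no typing of $\mu X.P$ whatsoever --- and even then one must confront the fact that replacing $\Tenv_X$ by $\Tenv_1$ in the \runa{Par} split changes the root environment, so the lemma's claim that the \emph{same} $\Tenv, \Renv$ types $C[\mu X.P]$ needs a further argument identifying $\Tenv_X$ and $\Tenv_1$ up to used-up endpoints.

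Your disjointness argument is essentially the right bookkeeping, but it silently relies on the same missing hypothesis. The inclusion $\fn{\mu X.P} \subseteq \dom{\Tenv_1}$ is a free-names-in-domain property of a typing judgement \emph{for $\mu X.P$}, which the lemma does not give you; and the step $\kn{C} \cap \dom{\Tenv_1} = \emptyset$ stretches the paper's convention that all free and bound names are distinct --- a convention stated for the names occurring in a single process --- to the domain of a recursion environment, which is justified only once $\dom{\Tenv_1}$ has been tied to $\fn{\mu X.P}$, i.e.\ only once that same unstated typing hypothesis is in place. With the hypothesis made explicit, the disjointness half does close along the lines you sketch; without it, neither half can be derived from what the lemma actually assumes.
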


\begin{theorem}
Let $P$ be recursion-closed. Suppose $\Tenv, \emptyRenv \vdash P$. Then $P$ is depth-bounded.
\end{theorem}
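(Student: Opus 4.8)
The plan is to exhibit a single bound $k \in \Nat$, computed from $P$, and to show that every reachable process, once its superfluous bound names are removed by normalization, has nesting depth at most $k$. Since normalization is contained in structural congruence (by the footnote, $\new{x:T}Q \scong Q$ whenever $x \notin \fn{Q}$), this yields $\depth{P'} \le k$ for all $P'$ with $P \ra^* P'$, which is exactly depth-boundedness. First I would record that typing and balancedness are preserved along reductions: using the Fidelity theorem (Theorem~\ref{thm:fidelity}), the lemma on the shape of reductions, and subject congruence (Lemma~\ref{lemma:sc}), an induction on the length of $P \ra^* P'$ gives a balanced $\Tenv'$ with $\Tenv', \emptyRenv \vdash P'$, with $P'$ still recursion-closed. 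Thus every reachable process is itself well-typed in an empty recursion environment, so the structural lemmas of this section apply to it.

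The core of the argument is an invariant controlling how unfolding introduces nesting. I would define $k$ to be the maximal value of $\nest{\cdot}$ ranging over $P$ and over each of its finitely many recursion-body subterms $\mu X.Q$ (evaluated, as in the definition of \textsf{nest}, with recursion variables contributing $0$). I then prove by induction on the length of the reduction sequence that for every reachable $P'$, if $P' \leadsto P''$ then $\nest{P''} \le k$. The base case is immediate. For the inductive step I distinguish the two cases of the lemma on the shape of reductions. When no unfolding is needed ($\alpha = \set{x}$), the step is a \runa{Com-Annot} inside an unfolding context; it only consumes prefixes and performs a substitution, and any restriction it unblocks was already present under a prefix of the previous process, so the nesting of the normalized result stays bounded by the syntactic bound $k$. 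When unfolding is needed ($\alpha = \set{\Rec, x}$), we have $P' \scong C[\mu X.Q]$ and reduce the unfolded body; here I invoke Lemma~\ref{lemma:newnames}, which gives $\kn{C} \cap \fn{\mu X.Q} = \emptyset$, together with the \runa{Var} rule, which forces the linear part of the environment visible at the recursion variable to be terminal, and the lemma stating that a restriction $\new{c:(\sessend,\sessend)}$ has $c \notin \fn{\cdot}$. Combined, these show that the session restrictions accumulated in the context $C$ are all at type $\sessend$ at the moment of unfolding, hence superfluous, and are erased by normalization. What survives is the freshly unfolded body in its reduced context, whose nesting is again bounded by $k$.

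The main obstacle is precisely this inductive maintenance of the bound across unfoldings: making rigorous that the directly-nested restrictions created by earlier recursive iterations have become terminal and are therefore eliminated by $\leadsto$, so that nesting cannot accumulate generation after generation. This requires carefully threading together the terminality condition of \runa{Var}, the disjointness of Lemma~\ref{lemma:newnames}, and the non-freeness of terminal restrictions, while also handling restrictions that communication unblocks from under prefixes and the possibility of nested recursive definitions, where the outer session channels must likewise be terminal before an inner recursion variable is reached. Finiteness of the type depths $d(S)$ and of the set of recursion bodies guarantees that the resulting bound $k$ is well defined and uniform over all reachable processes. Concluding is then routine: for any $P'$ take a normalized $P''$ with $P' \leadsto P''$; since $P'' \scong P'$ we get $\depth{P'} \le \nest{P''} \le k$, so $P$ is depth-bounded.
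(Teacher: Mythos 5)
Your overall skeleton is the same as the paper's (outline) proof: propagate typability along reductions via fidelity, and control the unfolding case by combining the freshness property of Lemma~\ref{lemma:newnames}, the terminality forced by \runa{Var}, and the fact that restrictions $\new{c:(\sessend,\sessend)}$ bind names that are no longer free and hence vanish under normalization. The genuine gap is your definition of the bound $k$. You take $k$ to be the maximum of $\nest{\cdot}$ over $P$ and its recursion bodies, but $\textsf{nest}$ assigns $0$ to every prefixed subprocess, so restrictions guarded by an input or output prefix contribute nothing to $k$. Nothing in the type system forbids such guarded restrictions (the premises of \runa{In-1} and \runa{Out-1} may themselves be derived with \runa{Session}), and a single \runa{Com-Annot} step consumes the guard and makes them active, at which point they do count for $\textsf{nest}$ and, being in use, survive $\leadsto$. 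For instance, a recursion-free well-typed process of the shape $\new{a:(?T.\sessend,\,!T.\sessend)}\bigl(\inp{\pol{a}{-}}{x}\new{b:T_1}\new{c:T_2}Q \para \outp{\pol{a}{+}}{\pol{d}{\epsilon}}\Nil\bigr)$, where $b$ and $c$ are used in $Q$, has $\nest{P}=1$, hence $k=1$ by your definition, yet after one reduction the normalized reduct has nesting $2$. So your invariant ``$P \ra^* P'$ and $P' \leadsto P''$ implies $\nest{P''}\le k$'' already fails at the first step, and it fails in the case you declared routine (\runa{Com-Annot} with no unfolding), not in the unfolding case you correctly flagged as the crux.

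The repair is to make the bound sensitive to guarded restrictions: for example, let $k$ count \emph{all} restriction operators syntactically occurring in $P$, plus the maximal number of restriction operators in any recursion body $\mu X.Q$ of $P$; equivalently, follow the paper and account via the internal environment $\Tenv_P$, which by Definition~\ref{def:internal} recurses under prefixes ($\Tenv_{\pi.P}=\Tenv_P$), combined with the type depths $d(T)$ and Theorem~\ref{thm:fidelity}, so that names activated later are budgeted from the start. With such a bound, your induction on the length of the reduction sequence, and in particular your treatment of unfoldings (terminal restrictions from earlier generations erased by $\leadsto$, fresh names contributing at most the per-body bound), goes through and coincides in substance with the paper's argument.
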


\begin{proof}[Outline]
  The session types provide a bound on the nesting depth of a
  well-typed process. Suppose $\Tenv, \emptyRenv \vdash P$. Let
  $d(\Tenv,P)$ denote the sum of the depths of the session types in
  $\Tenv$ and in $\Tenv_P$, i.e.
\[ d(\Tenv,P) = \sum_{x : T \in \Tenv \text{ or } x : T \in \Tenv_P} d(T) \]
In a process $P$ with $k$ bound names, we know from Theorem
\ref{thm:fidelity} that there can be at most
$(d(\Tenv + \Tenv_P)/2) - k$ reduction steps before an unfolding has
to take place, since every reduction step will decrease the depth of
one of the session types in $\ran{\Tenv} \cup \ran{\Tenv_P}$. Whenever
unfoldings occur, the bound names in the unfolding are distinct from
those already known and will all be names of session
channels. Moreover, when the unfolding is reached, the channel used in
the reduction will no longer be available. As a consequence we see
that the nesting depth will therefore not increase.
\end{proof}

\section{A type system for name-boundedness}\label{sec:namebounded}

We now show to modify our previous type system such that every
well-typed process will be name-bounded. The challenge is again one of
controlling recursion. As before, the crucial observation is that if
private channels are linear, then all the channels that have been used
when a recursion unfolding takes place, can then be discarded.

In the case of name-boundedness, extra care must be taken, since
recursion may now accumulate an unbounded number of finite components
that each contain pairwise distinct bound names.

\begin{example}
  The untyped process
  \[ P_2 = \mu X. \new{r_1}\new{r_2}( \outp{r_1}{a}X \para
  \inp{r_1}{x}X \para \outpend{r_2}{a} \para \inpend{r_2}{x}) \]
shows two problems that must be dealt with. Firstly, unfolding a
recursion may introduce more parallel recursive components that each
have their own bound names. In this case, every communication on $r_1$
will introduce two new parallel copies of the recursive
process. Secondly, unfolding may introduce finite (non-recursive)
components which contain bound names that persist -- in this case, we
get new copies of $\new{r_2} (\outpend{r_2}{a} \para \inpend{r_2}{x})$
for every unfolding.
\end{example}

The type language is
\begin{align*}
S_\lin ::= \inpt{T_\lin}{S_\lin} \mid \outpt{T_\lin}{S_\lin} \mid \sessend  &&
S_\un ::= \chan{S_\un}  \\
T_\lin ::=  (S_\lin,S_\un) \mid (S_\lin,S_\lin) && 
T ::= T_\lin \mid S_\un 
\end{align*}
Note that names of unlimited type $S_\un$ can only be used to delegate
channels of unlimited type. 

% A type is called \emph{skew} if it is of the form
% $(S_\lin,S_\un)$. Intuitively, names of skew type have one endpoint,
% namely the linear one, that can be used within a recursive subprocess,
% and another endpoint, namely the unbounded one, that can be used in a
% non-recursive subprocess.  

The type rules are as in the original type system, but we now modify the
notions of addition for type environments and for recursion
environments. We add pairs $(\Tenv_1,\Renv_1)$ and
$(\Tenv_2,\Renv_2)$ as follows.

\begin{definition}
  Let $\Tenv_1,\Tenv_2$ be type environments and let $\Renv_1,\Renv_2$
  be recursion environments where at least one of $\Renv_1, \Renv_2$
  is $\emptyRenv$. We define
  $(\Tenv_1,\Renv_1) + (\Tenv_2,\Renv_2) = (\Tenv_1 + \Tenv_2,\Renv_1
  + \Renv_2)$ where $\Tenv_1$ is unlimited if $\Renv_1 = \emptyRenv$
  and $\Tenv_2$ is linear if $\Renv_2 \neq \emptyRenv$.
\end{definition}

The intention is that an empty recursion environment must now go
together with an unlimited type environment. In other words:
Non-recursive subprocesses can only contain unlimited names.

We say that a type environment $\Tenv$ is \emph{limited} if for every
$x \in \dom{\Tenv}$ we have that $\Tenv(x) =
(T_\lin,\ol{T_\lin})$ for some $T_\lin$. That is, the environment is balanced, and no
name has an unlimited type.

A type environment $\Tenv$ is \emph{skew} if $\Tenv = \Tenv_1 +
\Tenv_2$ with $\dom{\Tenv_1} \cap \dom{\Tenv_2} = \emptyset$,
$\Tenv_1$ is linear and for all $x \in \dom{\Tenv_2}$ we have that
$\Tenv(x) = (T_\lin,T_\un)$ for some $T_\lin, T_\un$.

\subsection{Fidelity}

As in the case of the previous type system, we need a fidelity result.

\begin{theorem}[Fidelity]\label{thm:fidelity-2}
Let $\Tenv$ be a type environment. If $\Tenv , \emptyRenv \vdash P$
and $P \trans{x} P'$ then 

\begin{itemize}
\item if $x \in \fn{P}$ and $\Tenv =
  \Tenv'', x : T$, then $\Tenv', \emptyRenv \vdash P'$ where $\Tenv'$ is balanced
  and $\Tenv' = \Tenv'', x : T\downarrow$
\item if $x\notin \fn{P}$, then $\Tenv, \emptyRenv \vdash P'$ and if $\Tenv_P =
  \Tenv'', x : T$ then $\Tenv_{P'} = \Tenv'', x : T \downarrow$ and
  $\Tenv_{P'}$ is balanced.
\end{itemize}
\end{theorem}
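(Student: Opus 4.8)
The plan is to reuse wholesale the structure of the proof of Theorem \ref{thm:fidelity}, since the type rules of the name-bounded system are literally those of Table \ref{tab:depthbounded}; only the addition operation on type and recursion environments has been redefined. I would therefore argue by induction on the annotated reduction rules, treating the same five cases \runa{Com-Annot}, \runa{Par-Annot}, \runa{New-Annot}, \runa{Unfold-Annot} and \runa{Struct-Annot}. Before beginning I would check that the two auxiliary results invoked in the first proof survive the change of type language: the name-substitution lemma (Lemma \ref{lemma:subst-name}) and subject congruence/normalization (Lemma \ref{lemma:sc}). Both are proved by induction on the type rules, which are unchanged, so they transfer, provided each appeal to environment addition is re-checked against the new side conditions. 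As in Theorem \ref{thm:fidelity} I would rely on $\Tenv$ being balanced, which is what forces the two endpoint annotations of the communicating channel to be dual; since the statement only says ``type environment'' I would flag that this hypothesis is really needed.

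The base case \runa{Com-Annot} is where the argument does real work, and here only the first clause (the free-name case) is relevant. With $P = \inp{\pol{a}{p}}{x}P_1 \para \outp{\pol{a}{\ol{p}}}{\pol{y}{q}}P_2$ and $a$ free, I would invert \runa{Par} to split $\Tenv = \Tenv_1 + \Tenv_2$, then invert \runa{In-1} and \runa{Out-1} to read off $\Tenv_1(\pol{a}{p}) = \inpt{T_1}{S}$ and, by balancedness, $\Tenv_2(\pol{a}{\ol{p}}) = \outpt{T_1}{\ol{S}}$. Applying Lemma \ref{lemma:subst-name} to realise the substitution $P_1\subst{y}{x}$ and recombining with \runa{Par} yields $P'$ typed in the environment in which the pair annotation of $a$ has advanced from $(\inpt{T_1}{S},\outpt{T_1}{\ol{S}})$ to its successor $(S,\ol{S}) = T\downarrow$; this environment is again balanced because $S$ and $\ol{S}$ are dual, exactly as in the first proof. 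The parallel, simpler subcase is a communication on an unlimited channel, handled by \runa{In-2}/\runa{Out-2}, where the type of $a$ is unchanged.

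For the inductive cases I would proceed routinely. \runa{Par-Annot} applies the induction hypothesis to the sub-reduction and re-applies \runa{Par}; \runa{New-Annot} applies the induction hypothesis under \runa{Session} (or \runa{Chan}) and is exactly where the bound-name clause is discharged, the annotation $\new{x:T}$ being rewritten to $\new{x:T\downarrow}$ and the claim about the internal environment $\Tenv_{P'}$ being read off. \runa{Unfold-Annot} and \runa{Struct-Annot} follow from the adapted Lemma \ref{lemma:sc} together with a direct appeal to the induction hypothesis.

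The main obstacle I anticipate is the bookkeeping imposed by the redefined addition. In the depth-bounded system every disjoint split $\Tenv = \Tenv_1 + \Tenv_2$ was admissible, whereas now a split of a pair $(\Tenv,\emptyRenv)$ must partition the recursion environment so that an empty recursion environment is paired only with an unlimited type environment and a nonempty one only with a linear type environment. The crux is therefore to verify, at each inversion of \runa{Par} in the \runa{Com-Annot} case and at each recombination, that these side conditions can be met and are re-established after the session step. I expect this to come down to the observation that a communication only replaces an endpoint type $S_\lin$ by its continuation, which is again of the form $S_\lin$ and never turns a linear name unlimited or vice versa, so that the limited/skew shape of the environments — the invariant the subsequent soundness argument will depend on — is preserved throughout.
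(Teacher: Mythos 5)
Your proposal is correct, and it is essentially the expanded form of what the paper does: the paper's entire proof of Theorem \ref{thm:fidelity-2} is the single remark that ``since the new type system specialized the previous one, this result is easily established,'' i.e.\ the proof of Theorem \ref{thm:fidelity} is to be replayed. Your expansion is more careful than the paper's one-liner in two respects worth noting. First, you correctly identify that one cannot just invoke Theorem \ref{thm:fidelity} as a black box: a derivation in the new system is also a derivation in the old system (the new addition is more restrictive), but that argument would only give typability of $P'$ in the \emph{old} system, whereas the theorem claims typability of $P'$ in the \emph{new} system. Re-establishing that requires exactly what you describe --- replaying the induction and checking, at every inversion and re-application of \runa{Par}, that the side conditions of the redefined addition (empty recursion environment paired with an unlimited type environment, non-empty with a linear one) hold again after the communication step; your observation that a reduction replaces a linear endpoint type by its linear continuation, and never changes the linear/unlimited status of any name, is precisely the invariant needed. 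Second, you rightly flag that the statement as printed drops the hypotheses of Theorem \ref{thm:fidelity}: balancedness of $\Tenv$ is genuinely needed in the \runa{Com-Annot} case to obtain duality of the two endpoint annotations (and recursion-closedness of $P$ is likewise tacitly assumed), so this is an omission in the paper's statement rather than a defect in your argument.
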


Since the new type system specialized the previous one, this result is
easily established.

\subsection{Soundness for name-boundedness}

We will show that if a process is well-typed in a limited environment,
then it is name-bounded. 

% If a process is well-typed in a skew
% environment, then it will be bisimilar to a name-bounded process.

To show that a well-typed process $P$ is name-bounded, we will show that

\begin{itemize}
\item For some $k$, whenever $P \ra^* P'$, then $P'$ has at most $k$
  recursion instances in $P'$
\item For some $m$, whenever $P \ra^* P'$, every recursive subprocess
  of $P'$ contains at most $m$ distinct bound names
\item There are only free names in the non-recursive part of $P$
\end{itemize}

Since every well-typed process is known to be depth-bounded, the
result will then follow.

Our first lemma gives a characterization of well-typed recursive
processes: They can contain at most one instance of each recursion
variable. 

\begin{lemma} \label{lemma:one-rec-only}
Let $\mu X.P$ be a process for which all binding occurrences of
recursion variables are distinct. If $\Tenv, \Renv \vdash \mu X.P$,
there is at most one occurrence of $X$ in $P$.
\end{lemma}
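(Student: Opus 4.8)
The plan is to prove, by induction on the structure of the process while inverting the applicable typing rule, a statement slightly stronger than the lemma: whenever $\Tenv', \Renv' \vdash Q$ is a judgement in the derivation and $X$ is not re-bound inside $Q$, then (i) if $X \in \dom{\Renv'}$ the variable $X$ occurs at most once in $Q$, and (ii) if $X \notin \dom{\Renv'}$ then $X$ does not occur in $Q$ at all. The lemma then follows by inverting \runa{Rec} on $\Tenv, \Renv \vdash \mu X.P$ to obtain the premise $\Tenv, (\Renv, X : \Tenv) \vdash P$: since all binding occurrences are distinct, $X$ is not re-bound in $P$, and $X \in \dom{(\Renv, X : \Tenv)}$, so clause (i) yields at most one occurrence of $X$ in $P$.

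The crucial observation driving the induction is the modified addition used by \runa{Par} in the name-bounded system: $(\Tenv_1, \Renv_1) + (\Tenv_2, \Renv_2)$ is only defined when at least one of $\Renv_1, \Renv_2$ equals $\emptyRenv$. Hence, in the inductive step $Q = Q_1 \para Q_2$ with $\Renv' = \Renv_1 + \Renv_2$, the variable $X$ can lie in the domain of at most one sub-environment. If $X \in \dom{\Renv'}$, taking without loss of generality $\Renv_2 = \emptyRenv$, we have $X \in \dom{\Renv_1}$ and $X \notin \dom{\Renv_2}$; the induction hypothesis gives at most one occurrence in $Q_1$ by (i) and no occurrence in $Q_2$ by (ii), for a total of at most one. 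If $X \notin \dom{\Renv'}$, then $X$ is absent from both sub-environments and (ii) rules out occurrences in either component. This is the step on which the whole lemma rests, and it is exactly where the name-bounded system departs from the depth-bounded one, whose \runa{Par} addition permits both recursion environments to be non-empty and would therefore allow $X$ to appear in both branches.

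The remaining cases are routine inversions. For \runa{Nil} and for a variable $Y \neq X$ there are no occurrences of $X$; for \runa{Var} applied to $X$ itself there is exactly one, consistent with $X \in \dom{\Renv'}$, since the side condition $\Renv'(X) = \Tenv_1$ forces $X \in \dom{\Renv'}$ — which is precisely why an absent $X$ can never be the subject of a \runa{Var} leaf, establishing (ii) for variables. The prefix rules \runa{In-1}, \runa{In-2}, \runa{Out-1}, \runa{Out-2} and the rules \runa{Session} and \runa{Chan} leave $\Renv'$ unchanged, so both membership of $X$ and the occurrence count transfer directly from the continuation by the induction hypothesis. For \runa{Rec} applied to an inner $\mu Y.Q_1$, distinctness of binders gives $Y \neq X$, so extending $\Renv'$ by $Y : \Tenv'$ neither adds nor removes $X$ from its domain and does not affect occurrences of $X$, and the induction hypothesis on $Q_1$ closes the case.

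The main obstacle is not any single case but the choice of induction hypothesis: carrying only the ``at most one occurrence'' clause is insufficient, because the \runa{Par} step must certify that the empty-recursion-environment branch contains \emph{no} occurrence of $X$ whatsoever. Pairing that clause with the complementary statement (ii) for absent variables makes the induction self-supporting, and the only genuinely load-bearing check is that descending through the typing rules never inserts $X$ into an empty recursion environment, which rests on $X$ not being re-bound, i.e.\ on the distinctness assumption.
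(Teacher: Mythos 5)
Your proof is correct, and it rests on exactly the two facts that drive the paper's own argument: in the name-bounded system the addition $(\Tenv_1,\Renv_1)+(\Tenv_2,\Renv_2)$ underlying \runa{Par} forces at least one premise to carry $\emptyRenv$, and \runa{Var} demands $\Renv(X)=\Tenv_1$, so no occurrence of $X$ can be typed under an empty recursion environment (distinctness of binders guaranteeing that $X$ is never re-inserted by an inner \runa{Rec}). The packaging, however, is genuinely different. The paper argues by contradiction: it assumes two occurrences of $X$, decomposes the body as $\new{\vec{n}}(C_1[X] \para C_2[X] \para P')$, and observes that the \runa{Par} application separating the two occurrences must produce a premise $\Tenv_2, \emptyRenv \vdash C_2[X]$ that admits no derivation. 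You instead prove, by induction on typing derivations, the positive two-clause invariant that $X \in \dom{\Renv'}$ permits at most one occurrence while $X \notin \dom{\Renv'}$ permits none. What your route buys is completeness of the case analysis: the paper's decomposition tacitly places the two occurrences in distinct parallel components directly under the outer restrictions, and says nothing about occurrences buried under prefixes or under inner recursions, where the offending \runa{Par} node sits deeper in the derivation; your induction reaches those nodes uniformly, and your clause (ii) is precisely the strengthening needed to make the \runa{Par} case close -- a point the paper leaves implicit. What the paper's route buys is brevity: one localized contradiction instead of a rule-by-rule induction. One further remark in your favour: the paper justifies the underivability of $\Tenv_2, \emptyRenv \vdash C_2[X]$ by appeal to a linearity condition in \runa{Var}, whereas your justification -- that \runa{Var} cannot fire at all unless $X$ is in the domain of the recursion environment -- is the cleaner and more robust one.
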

\begin{proof}
  Suppose to the contrary that there is more than one occurrence of
  $X$ in $P$. We then have that $\mu X. P = \mu X. \new{\vec{n}}(C_1[X] \para C_2[X] \para P')$
where $n$ is a set of names (possibly empty), and $C_1$ and $C_2$ are
process contexts.

The derivation of the type judgement $\Tenv, \Renv \vdash \mu
X. \new{\vec{n}}(C_1[X] \para C_2[X] \para P')$ must have used the
\runa{Rec} type rule in its final step, having premise $\Tenv, \Renv, X : \Tenv \vdash \new{\vec{n}}(C_1[X] \para
C_2[X] \para P'$. But the derivation of this judgement must have used the \runa{Session} rule a number
of times, preceded by an application of \runa{Par} with premises
$\Tenv_1, \Renv, X : \Tenv \vdash C_1[X]$ and $\Tenv_2 , \emptyRenv \vdash C_2[X]$
where $\Tenv_2$ is unlimited. However, there can be no derivation of
the latter, since this would require the rule
\runa{Var} in which it is assumed that the type environment is linear.

We therefore conclude that our initial assumption was wrong; there can
be at most one occurrence of $X$ in $P$.
\end{proof}

This lemma tells us that there can be no finite, non-recursive
subprocesses of a recursive process with their own bound names; any
bound name found in a non-recursive subprocess will also appear in the
recursive part of the process.

\begin{lemma}
  If $\Tenv , \Renv \vdash \mu X. (C[X] \para P)$ where
  $\mu X. (C[X] \para P)$ is in inner normal form and $C[X]$ is a
  process context, then for every $n \in \bn{P}$ we have that
  $n \in \bn{C[X]}$.
\end{lemma}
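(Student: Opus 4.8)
The plan is to show that the parallel residue $P$ owns no active bound names, from which the statement follows. First I would note that bound names are pairwise distinct, so $\bn{P} \cap \bn{C[X]} = \emptyset$, and that the body $C[X] \para P$ has a parallel composition as its outermost operator, so no restriction scopes over both components; hence every name counted in $\bn{P}$ is introduced strictly inside $P$. Since names bound underneath a prefix are not active, and so are irrelevant to the restriction count of a normalized process, it suffices to prove that no restriction strictly inside $P$ introduces an active name. By Lemma~\ref{lemma:one-rec-only}, $X$ occurs exactly once, inside the context $C[X]$, so $P$ has no free occurrence of $X$.

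Next I would read off the derivation of $\Tenv,\Renv \vdash \mu X.(C[X]\para P)$: it must end with \runa{Rec}, yielding $\Tenv,\Renv, X:\Tenv \vdash C[X]\para P$ with a non-empty recursion environment, followed by \runa{Par}. Writing the split for $C[X]$ and $P$ as $(\Tenv_1,\Renv_1) + (\Tenv_2,\Renv_2)$, the occurrence of $X$ in $C[X]$ gives $X \in \dom{\Renv_1}$, so $\Renv_1 \neq \emptyRenv$; by the modified addition at least one summand has empty recursion environment, forcing $\Renv_2 = \emptyRenv$, and the side condition then makes $\Tenv_2$ unlimited. Thus $\Tenv_2,\emptyRenv \vdash P$ with $\Tenv_2$ unlimited. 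I would then induct on the structure of $P$: the cases $\Nil$, a recursion variable, and an inner $\mu Y.Q$ carry no active bound names; the case $P_1 \para P_2$ splits into two judgements that again have empty recursion environments and unlimited type environments, to which the induction hypothesis applies; and the prefix cases bind names only under a prefix. The only remaining cases are a session restriction $\new{n:(S,\ol{S})}P_1$ via \runa{Session} and a channel restriction $\new{n:\chan{T}}P_1$ via \runa{Chan}.

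The session case I would rule out as follows. If $S = \sessend$ then the lemma stating that $\Tenv,\Renv \vdash \new{c:(\sessend,\sessend)}Q$ implies $c \notin \fn{Q}$ contradicts the inner-normal-form requirement that $n$ be free in the component it scopes over; so $S \neq \sessend$, and both endpoints $n^{+},n^{-}$ must be consumed down to $\sessend$ before reaching the \runa{Nil} leaves, where the environment is unlimited. Because a session name, when used, occurs in precisely two parallel components (the width-$2$ property of linear environments), $n^{+}$ and $n^{-}$ sit in two distinct parallel components, so some internal \runa{Par} splits the linear name $n$ across summands whose recursion environments are both empty; the modified addition then forces both type environments to be unlimited, contradicting the linearity of $n^{\pm}$. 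Hence no session restriction is strictly inside $P$.

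The main obstacle is the channel case, where linearity gives no leverage: \runa{Chan} only requires $\emptyRenv$, which $P$ already has, so the typing alone does not forbid a private unlimited restriction. Here the argument must instead exploit name-boundedness itself: the residue $P$ is copied at every unfolding of $\mu X.(C[X]\para P)$, so a name restricted privately within $P$ would proliferate and break name-boundedness; the discipline must therefore guarantee that any such $n$ is in fact free in $C[X]$, in which case the inner normal form would lift $\new{n}$ to scope over both components, contradicting the outermost-parallel shape of the body. Making this precise --- pinning down exactly when the $\emptyRenv$ premise of \runa{Chan} can be met strictly inside the residue created by \runa{Rec}, and showing it forces sharing with $C[X]$ --- is the delicate step, and it is where the interplay between \runa{Chan}, \runa{Rec}, and the inner normal form has to be handled with care.
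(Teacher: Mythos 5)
Your proposal does not prove the lemma, and you concede as much: the \runa{Chan} case is left open, and it is precisely the case that your own analysis shows cannot be dismissed. After the \runa{Rec}/\runa{Par} decomposition you correctly establish that the component $P$ is typed under \emptyRenv with an unlimited type environment --- but that is exactly the configuration in which \runa{Chan} is applicable, so the typing discipline by itself does not rule out a restriction $\new{n:\chan{T}}P'$ inside $P$. Your suggested repair is moreover circular: you propose to argue that such a restriction ``would proliferate and break name-boundedness'', but this lemma is itself an ingredient of the soundness theorem for name-boundedness, so name-boundedness cannot be assumed in its proof. By contrast, the paper's proof is a short contradiction argument that never mentions \runa{Chan} at all: inner normal form yields a subprocess $\new{n:T}P'$ of $P$, the paper asserts that this subprocess would have to be typed using \runa{Session}, and that \runa{Session} requires a recursion variable to be present, contradicting the fact that $P$ is non-recursive. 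Your session-case analysis is essentially a more detailed reconstruction of that argument (via the modified addition forcing empty \Renv and unlimited \Tenv on the $P$ side); the difference is that you noticed the unlimited-channel possibility that the paper silently ignores, and then could not eliminate it.

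Two further steps of your induction are also unsound as stated. First, the case of an inner recursion $\mu Y.Q$ inside $P$ cannot be dismissed as ``carrying no active bound names'': guardedness only forces recursion \emph{variables} under prefixes, so a restriction sitting directly under $\mu Y$ is active ($\nest{\mu Y.Q} = \nest{Q}$), and typing $\mu Y.Q$ by \runa{Rec} re-populates the recursion environment with $Y$, destroying the invariant (empty \Renv, unlimited \Tenv) on which your whole argument rests; inside $Q$, \runa{Session} becomes applicable again. The paper sidesteps this by treating $P$ as non-recursive outright. Second, your session case leans on the claim that the two endpoints $\pol{n}{+}, \pol{n}{-}$ must lie in distinct parallel components; but the paper's width observation only says that a linear name occurs in \emph{one or} two components, and nothing in the rules forbids a single thread from consuming both endpoints sequentially down to $\sessend$ before reaching the \runa{Nil} leaf, in which case no \runa{Par} split ever separates them and the contradiction you rely on never materialises.
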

\begin{proof}
  Consider a name $n \in \bn{P}$. Suppose $n \notin \bn{C[X]}$. Since
  $\mu X. (C[X] \para P)$ is in inner normal form, we would then have
  a subprocess $\new{n:T}P'$ of $P$ that would be typed using the
  \runa{Session} rule. But for this rule to be applicable, a recursion
  variable must be present in the type environment. This cannot be the
  case, as $P$ is non-recursive.
\end{proof}

We now show that the number of recursive subprocesses that will appear
in any reduction sequence for a well-typed process is bounded. Let
$\recs{P}$ denote the number of simultaneous recursion instances in
$P$ and let $\recv{P}$ denote the multiset of recursion variable
occurrences in $P$. 

Together, the following two lemmas give an upper bound on the number
of recursion instances in any reduction sequence of a well-typed
process.

\begin{lemma}
Suppose $\Tenv , \Renv \vdash P$ and $P \trans{\alpha} P'$ was proved without using
instances of \runa{Unfold-Annot}. Then $\recs{P} \geq \recs{P'}$.
\end{lemma}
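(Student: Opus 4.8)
The plan is to establish the stronger equality $\recs{P} = \recs{P'}$ --- from which the stated inequality follows at once --- by induction on the derivation of $P \trans{\alpha} P'$. Since by hypothesis this derivation never applies \runa{Unfold-Annot}, the only rules that can occur are \runa{Com-Annot}, \runa{Par-Annot}, \runa{New-Annot} and \runa{Struct-Annot}. The first thing I would pin down is the reading of $\recs{\cdot}$: a recursion instance is a subterm of the form $\mu X.Q$, so that $\recs{\mu X.Q} = 1 + \recs{Q}$, $\recs{Q_1 \para Q_2} = \recs{Q_1} + \recs{Q_2}$, $\recs{\new{x:T}Q} = \recs{Q}$, $\recs{X} = \recs{\Nil} = 0$, and, crucially, $\recs{\inp{\pol{a}{p}}{x}Q} = \recs{\outp{\pol{a}{p}}{\pol{y}{q}}Q} = \recs{Q}$, because a prefix is not itself a recursion instance. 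The whole argument rests on the observation that recursion instances are created only by the $\mu$-constructor and are therefore insensitive to the operations that a non-unfolding reduction performs.

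For the base case \runa{Com-Annot} the redex $\inp{\pol{a}{p}}{x}P_1 \para \outp{\pol{a}{\overline{p}}}{\pol{y}{q}}P_2$ rewrites to $P_1\subst{\pol{y}{q}}{x} \para P_2$. By the clauses above the left-hand side has count $\recs{P_1} + \recs{P_2}$ and the reduct has count $\recs{P_1\subst{\pol{y}{q}}{x}} + \recs{P_2}$. Since $\subst{\pol{y}{q}}{x}$ is a capture-avoiding \emph{name} substitution, it rewrites only names and never introduces or deletes a $\mu$-binder, so $\recs{P_1\subst{\pol{y}{q}}{x}} = \recs{P_1}$; hence the two counts coincide. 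In words: a communication merely discards the two guarding prefixes --- under whose heads no recursion instance sits --- and renames a name in the continuation, so it leaves the number of recursion instances untouched.

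The inductive cases are routine. For \runa{Par-Annot} additivity of $\recs{\cdot}$ over $\para$ together with the induction hypothesis gives $\recs{P \para Q} = \recs{P} + \recs{Q} = \recs{P'} + \recs{Q} = \recs{P' \para Q}$, and for \runa{New-Annot} the replacement of the annotation $T$ by $T' = T\downarrow$ is invisible to $\recs{\cdot}$, so the hypothesis transfers directly. The only case that needs a preliminary observation is \runa{Struct-Annot}: here I would first show, by a short induction on the axioms of \tabref{tab:scong}, that $P \scong Q$ implies $\recs{P} = \recs{Q}$, since each axiom merely reorders components (\runa{Par-1}, \runa{Par-2}, \runa{New-1}), extrudes a restriction (\runa{New-2}), or erases a $\Nil$ or an empty restriction (\runa{Nil-1}, \runa{Nil-2}), none of which changes the multiset of $\mu$-subterms. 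Combined with the induction hypothesis on the inner reduction $Q \trans{\alpha} Q'$ this yields $\recs{P} = \recs{Q} = \recs{Q'} = \recs{P'}$.

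The main obstacle is one of care rather than of depth: one must verify that the substitution in \runa{Com-Annot} touches only names, and one must prove the congruence-invariance of $\recs{\cdot}$ before invoking \runa{Struct-Annot}. It is worth noting that the typing hypothesis $\Tenv, \Renv \vdash P$ is not actually used anywhere in this argument --- the invariance of $\recs{\cdot}$ under non-unfolding reductions is a structural property of arbitrary processes --- and that the argument in fact delivers equality, of which $\recs{P} \geq \recs{P'}$ is a trivial consequence.
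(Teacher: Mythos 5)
The paper states this lemma \emph{without} any proof, so there is no official argument to compare yours against; judged on its own merits, your proof is correct, and it establishes the natural strengthening $\recs{P} = \recs{P'}$ by exactly the induction one would expect. The \runa{Com-Annot} case correctly isolates the two facts that carry the whole lemma: consuming a pair of prefixes can neither remove nor duplicate a $\mu$-subterm, and a capture-avoiding name substitution cannot create or destroy a $\mu$-binder. Your preliminary observation that $\recs{\cdot}$ is invariant under $\scong$ is precisely what \runa{Struct-Annot} needs; strictly speaking this invariance is proved by induction on the derivation of $P \scong Q$ (axioms plus the equivalence and congruence closure rules), not merely by inspecting the axioms, but that is the routine point you flag yourself. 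One caveat deserves emphasis, because it is the only place the proof could silently go wrong: everything hinges on reading $\recs{P}$ as the number of \emph{all} subterms of the form $\mu X.Q$, including those guarded by prefixes, which is how you set up your clauses. The paper's informal definition (``the number of simultaneous recursion instances'') also admits the reading that only active, i.e.\ unguarded, instances count; under that reading the lemma itself is false, since an instance of \runa{Com-Annot} such as $\inp{\pol{a}{p}}{x}(\mu X.Q_1) \para \outp{\pol{a}{\ol{p}}}{\pol{b}{q}}Q_2 \trans{\set{a}} (\mu X.Q_1)\subst{\pol{b}{q}}{x} \para Q_2$ turns a guarded instance into an active one without any use of \runa{Unfold-Annot}, and the typing hypothesis does not rule out such redexes. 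So your reading is the one under which the statement holds, and your closing remark is then accurate: the hypothesis $\Tenv, \Renv \vdash P$ plays no role in this lemma; typability is genuinely needed only in the companion Lemma \ref{lemma:recs}, where unfolding is involved and well-typedness (via Lemma \ref{lemma:one-rec-only}) bounds how many copies of a recursive body an unfolding can spawn.
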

% \begin{proof}
% Induction in the reduction rules.
% \end{proof}

\begin{lemma}\label{lemma:recs}
Suppose $\Tenv, \Renv \vdash P$ where $\dom{\Renv} \cap \recv{P} =
\emptyset$ and $P > P_1$. Then $\recs{P} \geq \recs{P_1}$.\end{lemma}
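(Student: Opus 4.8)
The plan is to prove the statement by induction on the derivation of $P > P_1$. By \tabref{tab:unfold} this derivation is built from the two rules \runa{Unfold} and \runa{Context}, so there are two cases to treat; the side condition $\dom{\Renv} \cap \recv{P} = \emptyset$ is threaded through the induction so that the well-typedness hypothesis remains available at every step, and in particular so that Lemma \ref{lemma:one-rec-only} can be invoked in the base case.

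For the base case $P = \mu X.Q$ with $P_1 = Q[\mu X.Q / X]$ obtained by \runa{Unfold}, the crucial tool is Lemma \ref{lemma:one-rec-only}. Since $\Tenv, \Renv \vdash \mu X.Q$, that lemma guarantees that $X$ occurs \emph{at most once} in $Q$. Moreover, recursion in our calculus is guarded, so this single occurrence lies underneath an input or output prefix; consequently the copy $\mu X.Q$ introduced by the substitution is itself placed underneath a prefix in $Q[\mu X.Q / X]$. A guarded recursion is not a simultaneous recursion instance, so substituting into guarded positions leaves the active count unchanged, giving $\recs{Q[\mu X.Q / X]} = \recs{Q}$. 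Since the active recursion instances of $Q$ are among those counted for $\mu X.Q$, we have $\recs{\mu X.Q} \geq \recs{Q}$, and combining the two yields $\recs{P} \geq \recs{P_1}$.

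For the inductive case, $P = C[P_0]$ and $P_1 = C[P_0']$ with $P_0 > P_0'$ derived by \runa{Context}, where $C$ is an unfolding context of the form $[\,] \para R$ or $\new{x:T}C'$. I invert the final type rule used for $\Tenv, \Renv \vdash C[P_0]$ --- \runa{Par} in the first case, \runa{Session} or \runa{Chan} in the second --- to extract a typing $\Tenv_0, \Renv_0 \vdash P_0$. The side condition survives the descent because $\dom{\Renv_0} \subseteq \dom{\Renv}$ and $\recv{P_0} \subseteq \recv{C[P_0]}$, so that $\dom{\Renv_0} \cap \recv{P_0} \subseteq \dom{\Renv} \cap \recv{P} = \emptyset$; the induction hypothesis then gives $\recs{P_0} \geq \recs{P_0'}$. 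Using the clauses $\recs{P_0 \para R} = \recs{P_0} + \recs{R}$ and $\recs{\new{x:T}C'[\cdot]} = \recs{C'[\cdot]}$, the inequality is preserved under the context, establishing $\recs{C[P_0]} \geq \recs{C[P_0']}$.

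The main obstacle lies squarely in the base case: one must argue carefully that the substitution $Q[\mu X.Q / X]$ cannot multiply the number of active recursion instances. This is exactly where Lemma \ref{lemma:one-rec-only} is indispensable --- without the single-occurrence guarantee, a body with several occurrences of $X$ could in principle spawn several copies of $\mu X.Q$ in parallel --- and where guardedness is essential, since it ensures that whatever copies do appear remain underneath prefixes and hence are not counted as simultaneous instances. Everything else is routine structural bookkeeping over the unfolding context.
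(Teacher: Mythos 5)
Your proof cannot be compared against the paper's, because the paper states Lemma \ref{lemma:recs} without giving any proof at all; so I judge it on its own merits, and on those merits it is essentially correct. The induction on the derivation of $P > P_1$ with the two cases \runa{Unfold} and \runa{Context} is the natural argument, and it goes through under the intended reading of $\recs{P}$ as counting recursion instances \emph{not occurring under a prefix} --- a reading you use implicitly and which is in fact forced: if guarded instances were counted too, the lemma would already fail for $\mu X.\inp{a}{y}\mu Y.\inp{b}{z}X$, whose unfolding contains three $\mu$-subterms against the original two, even though $X$ occurs exactly once. This observation also shows that your closing claim is misplaced: Lemma \ref{lemma:one-rec-only} is \emph{not} the load-bearing ingredient of the base case, guardedness alone is. Every occurrence of $X$ in $Q$ --- however many there are --- sits under a prefix, so every copy of $\mu X.Q$ created by the substitution lands under a prefix and contributes nothing to the count of simultaneous instances; hence $\recs{Q[\mu X.Q/X]} = \recs{Q} \leq 1 + \recs{Q} = \recs{\mu X.Q}$ with no appeal to single occurrence. (Conversely, if guarded copies did count, single occurrence would not rescue the claim, as the nested example shows.) Similarly, the hypothesis $\dom{\Renv} \cap \recv{P} = \emptyset$ is threaded through your induction but never genuinely used. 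Neither point invalidates the proof --- invoking a true lemma redundantly and carrying an unused hypothesis are harmless --- but the final paragraph misidentifies where the difficulty lies. The inductive case, with inversion of \runa{Par}, \runa{Session} and \runa{Chan}, the inheritance of the disjointness condition, and the clauses $\recs{P_0 \para R} = \recs{P_0} + \recs{R}$ and $\recs{\new{x:T}Q} = \recs{Q}$ (restriction is not a prefix), is correct as written, modulo the small structural point that \runa{Context} applies a whole context $C$ in one step, so propagating through $C$ is an inner induction on the structure of $C$ rather than part of the outer induction on the unfolding derivation.
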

% \begin{proof}
%   For a recursion unfolding $\mu X.P > P[\mu X.P/X]$, the definition of
%   $\recs{P}$ tells us that the number of recursion instances can only
%   increase if $P[\mu X.P/X]$ contains more parallel components than
%   $\mu X.P$.  Consequently we must have $ P_1 = \mu X.P \para \mu
%   X.P \para P' $ for some $P'$, that may contain further recursion instances. However,
% we know that both of these instances of $\mu X.P$ arise from
% unfoldings of $X$.

% When typing $\mu X.P$, we must have used the type rule \runa{Par} with
% the premise $\Tenv, \Renv \vdash X \para X \para P'$.
% %
% By the \runa{Par} rule, wlog that we would then have to type this with
% premises $\Tenv_1, \Renv \vdash X$ and
% $\Tenv_2, \emptyRenv \vdash X \para P'$ with
% $\Tenv = \Tenv_1 + \Tenv_2$, but the type judgment
% $\Tenv, \emptyRenv \vdash X \para P'$ cannot be typed, as this would
% have involved another use of \runa{Par} with
% $\Tenv_{11}, \emptyRenv \vdash X$ for some $\Tenv_{11}$ and where
% $X \notin \dom{\emptyRenv}$.

% In conclusion, we must have that the number of recursion instances
% cannot increase when unfolding a well-typed process.
% However, Lemma \ref{lemma:one-rec-only} tells us that there can be at
% most one occurrence of the recursion variable in a well-typed
% recursive process.
% \end{proof}

The following normal form theorem is crucial.

\begin{theorem}
  If $\Tenv , \Renv \vdash P$, then there exists a $k \geq 0$ such
  that whenever $P \ra^* P'$, we have $P \scong P_1 \para P_2$ where
  $\recs{P_1} \leq k$, $\recs{P_2} = 0$ and $P_2$ contains no
  restrictions.
\end{theorem}
\begin{proof} We show that for all $n \geq 0$, if $P \ra^n P'$, then we have $P
\scong P_1 \para P_2$ where $\recs{P_1} \leq k$, $\recs{P_2} = 0$ and
$P_2$ contains no restrictions. The proof of this proceeds by induction in $n$.

\begin{description}
\item[$n=0$:]  Here we let $k = \recs{P}$ and proceed by induction in
  the type derivation of $\Tenv , \Renv \vdash P$. We consider each rule in
  turn.
  \begin{description}
  \item[\runa{In-1}, \runa{In-2}, \runa{Out-1} and \runa{Out-2}:] None
    of these rules could have been used, since $P$ would
    then have no reductions.
  \item[\runa{Par}:] Here we can use the commutativity and
    associativity axioms for structural congruence to rewrite $P$ in
    the desired form.
  \item[\runa{Var}:] Cannot apply, since we assume that $\Tenv , \Renv \vdash P$.
  \item[\runa{Rec}, \runa{Nil}, \runa{Session}:] These are immediate.
  \end{description}
\item[Assume for $n$, prove for $n+1$:] This is a straightforward
  induction in the type rules.
\end{description}

\end{proof}

\begin{theorem}
If $\Tenv , \Renv \vdash P$, then $P$ is name-bounded.
\end{theorem}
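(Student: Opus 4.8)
The plan is to assemble the final result from the three bulleted claims stated just before the normal-form theorem, using that theorem as the structural backbone. The goal is to exhibit a single constant bounding the number of active bound names in any normalized successor of $P$. I would argue as follows: by the preceding normal form theorem, there is a $k \geq 0$ such that whenever $P \ra^* P'$ we have $P' \scong P_1 \para P_2$ with $\recs{P_1} \leq k$, $\recs{P_2} = 0$, and $P_2$ containing no restrictions. The component $P_2$ therefore contributes no bound names at all after normalization, so I only need to bound the bound names contributed by $P_1$.

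The key step is to bound the bound names of $P_1$. Since $\recs{P_1} \leq k$, the process $P_1$ decomposes (up to structural congruence) into at most $k$ recursion instances together with surrounding restrictions and finite matter. By Lemma \ref{lemma:one-rec-only}, each well-typed recursive subprocess $\mu X. Q$ contains at most one occurrence of each recursion variable, so each recursion instance has a bounded ``shape.'' By the lemma immediately following it, every bound name occurring in the non-recursive part of a recursive subprocess also appears in its recursive context $C[X]$, so no recursion instance smuggles in persistent private names outside its own body. Consequently each of the at most $k$ recursion instances carries at most some fixed number $m$ of distinct bound names — here I would appeal to the second bullet, \emph{every recursive subprocess of $P'$ contains at most $m$ distinct bound names}, which is exactly the quantity controlled by the finiteness of the session types and the linearity discipline enforced by \runa{Var}. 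Combining these, $P_1$ has at most $k \cdot m$ active bound names up to structural congruence.

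Putting the pieces together: after normalizing $P'$ via $\leadsto$ (which removes bound names not occurring free, by Lemma \ref{lemma:sc} preserving typing throughout), the normalized process $P''$ has at most $k \cdot m$ restrictions. Since this bound is uniform over all $P'$ with $P \ra^* P'$, the process $P$ is name-bounded with constant $k \cdot m$. I would note explicitly that depth-boundedness, already established for well-typed processes, guarantees that the normalized forms are well-defined and that the nesting structure does not blow up, which is used implicitly when invoking the normal form theorem on each successor.

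The \textbf{main obstacle} I anticipate is pinning down the constant $m$ rigorously — that is, proving the second bullet that each recursive subprocess contains a bounded number of distinct bound names. The informal justification is that a recursive call via \runa{Var} requires the linear part of the type environment to be \emph{terminal}, so all session channels active at the recursion point are ``used up'' and the restricted names introduced by an unfolding are genuinely new and disjoint from the old ones (as formalized in Lemma \ref{lemma:newnames}, $\kn{C} \cap \fn{\mu X.P} = \emptyset$). Turning this ``names are used up at each unfolding'' intuition into a hard count of simultaneously-active distinct bound names within a single recursion instance is the delicate point: one must show that the finite depth $d(T)$ of the session types, bounded via Theorem \ref{thm:fidelity-2}, limits how many restrictions can be alive before an unfolding forces the old ones to be discarded. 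I expect this bound to follow from the fidelity theorem exactly as in the depth-boundedness soundness proof — each reduction decreases some session type's depth, so only boundedly many restrictions can coexist in one recursive body — but making the bookkeeping precise across the interplay of \runa{Session}, \runa{Rec}, and \runa{Var} is where the real work lies.
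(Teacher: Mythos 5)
Your proposal follows essentially the same route as the paper's proof: decompose each successor $P'$ via the normal form theorem into at most $k$ recursive components plus a restriction-free remainder, invoke Lemma \ref{lemma:one-rec-only} and the lemma that bound names of non-recursive subterms reappear in the recursive part, bound the names per recursive component, and multiply the two constants. The per-component bound $m$ that you flag as the main obstacle is precisely the point the paper dispatches by noting that this system is a subsystem of the depth-boundedness system, so the depth bound $d$ itself serves as the per-component name bound (yielding $d \cdot k$), rather than re-deriving such a bound from the fidelity theorem as you propose.
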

\begin{proof}
  There is a $k \geq 0$ such that if $\Tenv , \Renv \vdash P$, whenever
  $P \ra^* P'$, there are at most $k$ recursive subprocesses of
  $P'$. Since the new type system is a subsystem of the type system
  for depth-boundedness, there exists a $d$ such that the recursion
  depth of $P'$ is at most $d$ for any such $P'$.

  Every bound name in a non-recursive subterm of a recursive
  subprocess occurs in the recursion part as well. Now consider an
  outer normal form $P''$ of $P'$. We have
  $P'' = \new{x_1} \ldots \new{x_d} P^{(3)}$ for some $P^{(3)}$ that
  does not contain restrictions at the outermost level. Moreover, for
  some $k' \leq k$ we have
  $P^{(3)} \scong P^{(3)}_1 \para \cdots P^{(3)}_{k'} \para
  P^{(3)}_{k'+1}$
  where $P^{(3)}_1, \ldots, P^{(3)}_{k'}$ contain recursion instances
  and $P^{(3)}_{k'+1}$ is a process not containing recursion
  instances. We know that for some $d$ there are at most $d \cdot k$
  bound names in $P'$.
\end{proof}

\section{The relation to other classes of processes} \label{sec:other}

Because of the use of binary session types, typable process in our
systems will be width-bounded with name width $2$. On the other hand,
both type systems allow us to type processes that are not
finitary. The classes of typable processes differ from those already
studied. The process
$P_1 \defeq (\mu X. \new{a} \inp{a}{x}X \para \outpend{a}{b} \para
\outpend{b}{c})$ is not a finite-control process, since the reduction
sequence
$P \ra^{k} P_1 \para \outpend{b}{c} \para \cdots \para
  \outpend{b}{b}$ that results in $k-1$ parallel components, each being a simple
  output, shows that the number of
parallel components along a computation can be unbounded for a
well-typed process. This means that $P_1$
is neither a finite-control process \cite{DBLP:journals/iandc/Dam96}
nor a bounded process in the sense of
\cite{DBLP:conf/fossacs/Caires04}. On the other hand, $P_1$ is
depth-bounded, and in fact also width-bounded as every bound name
occurs in precisely two parallel components. Moreover, the typable processes are
incomparable with the processes studied in
\cite{Amadio:2002:DCR:643009.643011} since these do not allow for
delegation of input capabilities.

\section{Conclusions and ideas for further work}

In this paper we have presented two session type systems for a
$\pi$-calculus with recursion. One guarantees depth-boundedness, and
the other system, which is a subsystem of it, guarantees
name-boundedness. Both systems assume that names are always used in
finite-length sessions before a recursive call is initiated.

In the paper by D'Osualdo and Ong
\cite{DBLP:journals/corr/DOsualdoO15} a type inference algorithm is
proposed that makes it possible to provide a safe bound on the
restriction depth for depth-bounded processes. A further topic of
investigation is to adapt the type inference algorithm proposed in
\cite{DBLP:conf/wsfm/GraversenHHBPW15} to the setting of the type
systems of the present paper. We conjecture that this is
straightforward. The type systems presented in this paper are simpler
than many other session type systems, in that they do not involve
recursive types; the sole difference is that of the presence of
recursion instead of replication in the $\pi$-calculus.

In both systems, the number of parallel components in a well-typed
system can be unbounded, and well-typed processes need not be
finite-control. Conversely, finite-control processes need not be
well-typed in the present systems, since finite-control processes are
not necessarily width-bounded with width $2$.

Another important question to be answered is that of the exact
relationshop between our type system for depth-boundedness and the
type system due to D'Osualdo and Ong
\cite{DBLP:journals/corr/DOsualdoO15}.


\begin{thebibliography}{10}

\bibitem{Amadio:2002:DCR:643009.643011}
Roberto~M. Amadio and Charles Meyssonnier.
\newblock On decidability of the control reachability problem in the
  asynchronous pi-calculus.
\newblock \emph{Nordic J. of Computing}, 9(2):70--101, June 2002.

\bibitem{DBLP:conf/fossacs/Caires04}
Lu{\'{\i}}s Caires.
\newblock Behavioral and spatial observations in a logic for the pi-calculus.
\newblock In Igor Walukiewicz, editor, \emph{Proceedings of FOSSACS
  2004}, LNCS 2987, pp. 72-89, Springer, 2004. \doi{10.1007/978-3-540-24727-2\_7}.

\bibitem{DBLP:journals/iandc/Dam96}
Mads Dam.
\newblock Model checking mobile processes.
\newblock \emph{Inf. Comput.}, 129(1):35--51, 1996. \doi{10.1006/inco.1996.0072}.

\bibitem{DBLP:journals/corr/DOsualdoO15}
Emanuele D'Osualdo and Luke Ong.
\newblock A type system for proving depth boundedness in the pi-calculus.
\newblock \emph{CoRR}, abs/1502.00944, 2015. \url{http://arxiv.org/abs/1502.00944}

\bibitem{gayhole}
Simon~J. Gay and Malcolm Hole.
\newblock Subtyping for session types in the pi calculus.
\newblock \emph{Acta Inf.}, 42(2-3):191--225, 2005. \doi{10.1007/s00236-005-0177-z}.

\bibitem{DBLP:conf/wsfm/GraversenHHBPW15}
Eva Fajstrup Graversen, Jacob Buchreitz Harbo, Hans H{\"{u}}ttel,
  Mathias Ormstrup Bjerregaard, Niels Sonnich Poulsen, and Sebastian Wahl.
\newblock Type inference for session types in the {\(\pi\)}-calculus.
\newblock In T. Hildebrandt, A. Ravara, J. van~der
  Werf, and M. Weidlich (ed.) \emph{Proc. of WS-FM 2014
    and {WS-FM/BEAT} 2015}, LNCS 9421, Springer, 2015. \doi{{10.1007/978-3-319-33612-1\_7}}.

\bibitem{DBLP:conf/esop/HondaVK98}
Kohei Honda, Vasco~Thudichum Vasconcelos, and Makoto Kubo.
\newblock Language primitives and type discipline for structured
  communication-based programming.
\newblock In \emph{Proceedings of ESOP'98}, pages 122--138, 1998. \doi{10.1007/BFb0053567}.

\bibitem{Hüchting2013}
Rainer H{\"u}chting, Rupak Majumdar, and Roland Meyer.
\newblock \emph{A Theory of Name Boundedness}, pages 182--196.
\newblock Springer Berlin Heidelberg, Berlin, Heidelberg, 2013. \doi{10.1007/978-3-642-40184-8\_14}.

\bibitem{DBLP:journals/iandc/LanesePSS11}
Ivan Lanese, Jorge~A. P{\'{e}}rez, Davide Sangiorgi, and Alan Schmitt.
\newblock On the expressiveness and decidability of higher-order process
  calculi.
\newblock \emph{Inf. Comput.}, 209(2):198--226, 2011. \doi{10.1016/j.ic.2010.10.001}.

\bibitem{Lange:2017:FOG:3009837.3009847}
Julien Lange, Nicholas Ng, Bernardo Toninho, and Nobuko Yoshida.
\newblock Fencing off Go: Liveness and safety for channel-based programming.
\newblock In \emph{POPL 2017}, pp. 748--761, New York, NY, USA,
  2017. ACM. \doi{10.1145/3009837}.

\bibitem{DBLP:conf/ifipTCS/Meyer08}
Roland Meyer.
\newblock On boundedness in depth in the pi-calculus.
\newblock In Giorgio Ausiello, Juhani Karhum{\"{a}}ki, Giancarlo Mauri, and
  C.{-}H.~Luke Ong, editors, \emph{Proceedings of TCS 2008}, volume
  273 of \emph{{IFIP}}, pages 477--489. Springer, 2008. \doi{10.1007/978-0-387-09680-3\_32}.

\end{thebibliography}
\end{document}